\begin{document}

\title{
Asynchronous Activity Detection for Cell-Free Massive MIMO: From Centralized to Distributed Algorithms}


\author{Yang~Li,~
        Qingfeng~Lin,~
        Ya-Feng~Liu,~
        Bo~Ai,~%
        and~Yik-Chung~Wu

\thanks{Y. Li is with Shenzhen Research Institute of Big Data,
Shenzhen 518172, China, and also with State Key Laboratory of Rail Traffic Control and Safety,
Beijing Jiaotong University, Beijing 100044, China
(e-mail: liyang@sribd.cn).}
\thanks{Q. Lin and Y.-C. Wu are with the Department of Electrical and Electronic Engineering, The University of Hong Kong, Hong
Kong (e-mail: qflin@eee.hku.hk; ycwu@eee.hku.hk).}
\thanks{Y.-F. Liu is with the State Key Laboratory of Scientific and Engineering
Computing, Institute of Computational Mathematics and Scientific/Engineering
Computing, Academy of Mathematics and Systems Science, Chinese Academy
of Sciences, Beijing 100190, China (e-mail: yafliu@lsec.cc.ac.cn).}
\thanks{B. Ai is with State Key Laboratory of Rail
Traffic Control and Safety, Beijing Jiaotong University, Beijing 100044, China,
with Peng Cheng Laboratory, Shenzhen 518055, China, and with
Henan Joint International Research Laboratory of Intelligent Networking and
Data Analysis, Zhengzhou University, Zhengzhou 450001, China (e-mail: boai@bjtu.edu.cn).}
}

\maketitle
\begin{abstract}
Device activity detection in the emerging cell-free massive
\textcolor{black}{multiple-input multiple-output (MIMO)} systems
has been recognized as a crucial task in machine-type communications,
in which \textcolor{black}{multiple access points (APs) jointly}
identify the active devices from a large number of potential devices based on the received signals.
\textcolor{black}{Most of the existing works addressing this problem rely on}
the \textcolor{black}{impractical}
assumption that different active devices transmit signals synchronously.
However, in practice, synchronization cannot be guaranteed due to the low-cost oscillators,
which brings \textcolor{black}{additional}
discontinuous and nonconvex constraints to the detection problem.
\textcolor{black}{To address this challenge, this paper reveals an equivalent reformulation to the asynchronous activity detection problem, which facilitates the development of}
a centralized algorithm and a distributed algorithm
that \textcolor{black}{satisfy} the highly nonconvex constraints in
a gentle fashion as the iteration number increases, so that the sequence
\textcolor{black}{generated by the proposed algorithms} can get around
bad stationary points. To reduce the capacity requirements of the fronthauls,
we further design a communication-efficient \textcolor{black}{accelerated}
distributed algorithm.
Simulation results demonstrate that the
proposed centralized \textcolor{black}{and distributed algorithms outperform}
state-of-the\textcolor{black}{-art} approaches,
and the proposed \textcolor{black}{accelerated} distributed algorithm achieves a close detection performance to that of the centralized
algorithm \textcolor{black}{but with a much smaller number of bits to be transmitted on the fronthaul links}.
\end{abstract}
\begin{IEEEkeywords}
Asynchronous activity detection, cell-free massive \textcolor{black}{multiple-input multiple-output (MIMO)}, grant-free random access, Internet-of-Things (IoT), machine-type communications (MTC), nonsmooth and nonconvex optimization.
\end{IEEEkeywords}


\section{Introduction}
As a new paradigm in the fifth-generation and beyond wireless systems,
machine-type communications (MTC) provide efficient random access for
a large number of Internet-of-Things (IoT) devices, of which only a small portion are active at any given time due to the sporadic traffics \cite{Bockelmann2016}. To meet the low-latency requirement in MTC,
a grant-free random access scheme was advocated
in \cite{LiuLiang2018,XM21}, where the devices transmit signals
without the permissions from the access points (APs).
A crucial task during the random access phase is device activity detection,
in which each active device transmits a unique signature sequence so that
the APs \textcolor{black}{could}
identify the active devices from the received signals \cite{LiuLiang2018-1,LiuLiang2018-2}.

\textcolor{black}{However}, due to the large number of devices but limited coherence time,
the signature sequences have to be nonorthogonal, and hence the
interference among different devices makes device activity detection challenging.
Moreover, since the IoT devices are commonly equipped with low-cost oscillators,
the transmissions of different active devices cannot be perfectly synchronized,
\textcolor{black}{which brings} an additional challenge to the task of device activity detection.

While asynchronous transmissions are common for IoT devices,
the existing studies \textcolor{black}{on device activity detection}
focus more on \textcolor{black}{the synchronous case}
\cite{Chen2018,Ding19,Ke20,Chen2019,Senel2018,Yuan20_2,Yuan20_5,Jiang2020,Mei21,Ai22,Liu2018,He2018,Li2019,Shao2020,LiTing22,
Haghighatshoar2018,chenIcc,icasspW,Wang21,Chen2021,Lin2022,Fengler2021,Chen2020,Ke2021,Shao2020-2,Ganesan2021},
which can be roughly divided into two lines of research.
\textcolor{black}{In the first line of research,}
by exploiting the sporadic traffics, compressed sensing (CS)
based methods have been widely studied
\cite{Chen2018,Ding19,Ke20,Chen2019,Senel2018,Yuan20_2,Yuan20_5,Jiang2020,Mei21,Ai22,Liu2018,He2018,Li2019,Shao2020,LiTing22}. In particular, approximate message passing (AMP) was applied
to jointly estimate the activity status and the instantaneous channels in \cite{Chen2018,Ding19,Ke20,Chen2019},
and was further extended to include data detection in \cite{Senel2018,Yuan20_2,Yuan20_5,Jiang2020,Mei21}.
Besides, Bayesian sparse recovery \cite{Ai22}
and sparse optimization \cite{Liu2018,He2018,Li2019,Shao2020,LiTing22}
have also been investigated in the literature.
\textcolor{black}{Instead} of performing joint activity detection
and channel estimation, another line of research
called covariance approach identifies the active devices
without estimating the instantaneous channel \cite{Haghighatshoar2018,chenIcc,icasspW,Wang21,Chen2021,Lin2022}.
This approach exploits the statistical properties of the channel based on
the sample covariance matrix.
Compared with the CS based methods, analytical results have shown that
the covariance approach \textcolor{black}{can achieve a}
better detection performance with \textcolor{black}{a}
much shorter signature sequence length \cite{Fengler2021,Chen2020}.

Recently, cell-free massive \textcolor{black}{multiple-input multiple-output (MIMO)} has been recognized as an efficient
architecture for providing uniformly high data rates, in which
all the APs are connected to a central processing
unit (CPU) via fronthaul links for joint
signal processing. Cell-free massive MIMO has no ``cell boundaries'',
and hence overcomes the inter-cell interference.
As compared to the traditional network architecture, recent studies have shown that
cell-free massive MIMO can provide a better activity detection performance for MTC
using the CS based method \cite{Ke2021} or the covariance approach \cite{Shao2020-2,Ganesan2021}.

While the above existing works exemplify the possibility of
device activity detection,
they are designed under the
perfect synchronization assumption.
However, in practice, due to low-cost oscillators in IoT devices,
synchronous transmissions among different devices cannot be guaranteed \cite{YC11}.
Even though network synchronization algorithms \cite{Du2013,Luo2013} can be executed before activity detection, synchronization errors still exist.
This makes the received signature sequences
in actual scenarios largely different from those assumed in existing works.
Consequently, the above activity detection methods based on
the synchronous assumption \textcolor{black}{suffer significant degradation when applied in} asynchronous activity detection.
To address this issue,
\textcolor{black}{the} work \cite{Wang22} introduced $\ell_0$-norm constraints into the covariance based optimization problem, \textcolor{black}{equivalently making}
each active device transmit only one
effective signature sequence from different possible delays
in each transmission. To tackle the $\ell_0$-norm constrained problem,
\cite{Wang22} further proposed a block coordinate descent (BCD) algorithm,
which shows significant performance improvement compared with a CS based method \cite{LiuLiang21}.

\textcolor{black}{Unfortunately, the current solution in \cite{Wang22} faces a major challenge, which comes from the fact that enforcing these discontinuous and nonconvex $\ell_0$-norm
constraints within each iteration of the BCD algorithm
may cause} the solution to \textcolor{black}{get} stuck at bad stationary points\textcolor{black}{\cite{Foucart2009SparsestSO,Soubies15}}, which will degrade the detection performance.
To tackle \textcolor{black}{this} challenge, this paper proposes a novel equivalent penalized reformulation for the original asynchronous activity detection problem. We prove that these two problems are \textcolor{black}{equivalent} in the sense that their global optimal solutions
are identical under mild conditions.
We further propose an efficient centralized detection algorithm to solve the reformulated problem to a stationary point, which is also proved to be a stationary point
of the original problem\footnote{\textcolor{black}{
In this paper, the stationary point of a problem with nonconvex $\ell_0$-norm constraints
is more rigorously a B-stationary point, at which the
directional derivatives along any direction within its tangent cone are non-negative \cite{facchinei2007finite}.}}.
Instead of enforcing the highly nonconvex $\ell_0$-norm constraints within each iteration,
the proposed centralized algorithm guarantees these constraints to be satisfied \textcolor{black}{progressively}
as the iteration number increases. Therefore, the sequence \textcolor{black}{generated by the proposed algorithms} will not
\textcolor{black}{get} stuck at bad stationary points \textcolor{black}{caused} by the highly nonconvex $\ell_0$-norm constraints.
Simulation results show that the detection performance of the proposed centralized algorithm is much better than that of state-of-the-art approaches \cite{Wang22}.

While the proposed centralized algorithm outperforms the existing approaches,
it is totally executed at the CPU based on the received signals \textcolor{black}{collected} from all APs.
Thus, the computational \textcolor{black}{burden would be heavy} especially when the network size becomes large.
To reduce the computational cost at the CPU, it is more appealing to design a distributed algorithm
\textcolor{black}{in which part of the computations can be} performed at the APs \cite{Emil20,He21}.
\textcolor{black}{Going towards this direction},
we further propose a distributed detection algorithm,
which is executed at both the APs and the CPU.
Specifically, each AP performs a local detection for the devices and then sends its detection results to the CPU for further processing. In this way, the computations are balanced on various parts of the network. Moreover, the proposed distributed algorithm is also proved to converge to a stationary point
with the same solution quality to that of the centralized algorithm.

Notice that both the centralized and distributed algorithms
\textcolor{black}{require} communication overheads for exchanging information between the APs and the CPU.
Since the fronthauls are capacity-limited, the exchanged contents have to be compressed before
being transmitted \cite{Manijeh19,Manijeh21}. However, the compression error will in turn degrade the detection performance.
Therefore, it is desirable to design a communication-efficient algorithm that can reduce the capacity requirements of the fronthauls while still achieving satisfactory detection performance.
To this end, we further propose a heuristic scheme to modify the distributed algorithm such that the convergence is accelerated and the exchanged variables appear in a much smaller dynamic range.
Simulation results demonstrate that the accelerated distributed algorithm
achieves a close detection performance to that of the centralized algorithm
with only $1$ iteration, and its required number of bits transmitted on the fronthaul links
is also much smaller than
that of the centralized algorithm.

The remainder of this paper is organized as follows. System model and problem formulation
are presented in Section II.
\textcolor{black}{A centralized algorithm and a distributed algorithm are proposed
under perfect fronthaul links in Section III and Section IV, respectively.
A communication-efficient scheme is presented for practical capacity-limited fronthaul links in Section V.}
Finally, Section VII concludes the paper.

\section{System Model and Problem Formulation}
\subsection{System Model}
Consider an uplink cell-free massive MIMO system with $M$
APs and $K$ IoT devices arbitrarily and independently distributed in the network.
Each AP is equipped with $N$ antennas and
each device is equipped with a single antenna.
All the $M$ APs are connected to a CPU
via fronthaul links, so that the received signal from each AP
can be collected and jointly processed
at the CPU.
We adopt a quasi-static block-fading channel model, where
the \textcolor{black}{channels between devices and APs
remain} constant within each coherence block,
but may vary among different coherence blocks.
Let $\sqrt{g_{k,m}}\mathbf{h}_{k,m}$ denote the channel from the
$k$-th device to the $m$-th AP, where $\sqrt{g_{k,m}}$ and $\mathbf{h}_{k,m}\in\mathbb{C}^{N}$
are the large-scale and small-scale
fading components, respectively.
\textcolor{black}{In many practical deployment scenarios, the devices are stationary,
so their large-scale fading channels are fixed and can be obtained in advance using conventional
channel estimation methods \cite{Mossberg2006,Wang2018}.
In this paper, the large-scale fading channels are assumed to be known as in \cite{Senel2018,Chen2021,Ganesan2021,Wang21}. Moreover,
we consider that there are many objects in the environment that scatter the signal before arriving at each AP, so each entry in $\mathbf{h}_{k,m}$
can be well-modeled by $\mathcal{CN}\left(0, 1\right)$ \cite{BernardSklar97}.}
Due to the sporadic traffics of MTC, only a small portion of the $K$ devices are active in each
coherence block. If the $k$-th device is active, the activity status is denoted as $a_k=1$
(otherwise, $a_k=0$).

To detect the activities of the IoT devices, we assign each device a unique
signature sequence $\bar{\mathbf{s}}_k\in\mathbb{C}^{L}, \forall k=1,\ldots,K$, where
$L$ is the length of the signature sequence\textcolor{black}{\footnote{\textcolor{black}{
The length of the signature sequence $L$ is usually fixed within a deployment period of the network.
The value of $L$ realizes a trade-off between the detection performance and the computational complexity.
A larger $L$ will improve the detection performance but increase the computational complexity of the detection algorithms.}}}.
The signature sequences of all the $K$ devices
are assumed to be known.
Since the devices are commonly equipped with low-cost local oscillators,
the signature sequences of different devices \textcolor{black}{may not be transmitted synchronously}.
In particular, we assume that the $k$-th device transmits its
signature sequence with an unknown delay of $t_k\in\{0,\ldots,T\}$ symbols,
where the maximum delay $T$ is known\textcolor{black}{\footnote{\textcolor{black}{
The maximum delay $T$ depends on the symbol duration and the oscillators equipped on the device.
For instance, when the symbol duration is $5$ $\mu$s (when the signal bandwidth is $200$ kHz),
and the oscillators result in a maximum delay of $20$ $\mu$s,
the value of $T$ is $4$ symbols.
In the simulations of \cite{Wang22,LiuLiang21},
$T$ is set as $4$ and $5$ symbols, respectively.
To compare the detection performance under different $T$,
we vary it from $0$ to $8$ in Section VI.}}}.
With the $k$-th device transmitting its signature sequence
at the $(t_k+1)$-th symbol duration,
its effective signature sequence can be expressed as
\begin{equation}\label{effective}
\mathbf{s}_{k,t_k} = \big[\underbrace{0,\ldots,0}_{t_k},\bar{\mathbf{s}}_k^\mathrm{T},\underbrace{0,\ldots,0}_{T-t_k}
\big]^\mathrm{T}\in\mathbb{C}^{L+T},~~\forall k=1,\ldots,K.
\end{equation}
Consequently, the received signal over the $L+T$ symbol durations
at each AP can be written as
\begin{equation}\label{received}
\mathbf{Y}_m =
\sum_{k=1}^{K} a_k\sqrt{p_kg_{k,m}}\mathbf{s}_{k,t_k}
\mathbf{h}_{k,m}^\mathrm{T}+\mathbf{W}_m,~~\forall m=1,
\ldots,M,
\end{equation}
where $p_k$ is the transmit power of the $k$-th device
and the elements of $\mathbf{W}_m\in\mathbb{C}^{\left(L+T\right)\times N}$ are independent and identically distributed (i.i.d.) Gaussian noise
at the $m$-th AP following $\mathcal{C} \mathcal{N}\left(0, \sigma_m^{2} \right)$ with $\sigma_m^{2}$ being the noise variance.
\textcolor{black}{In \eqref{received}, the transmit power $p_k$ of each device can be different \cite{Sadeghabadi18}.
In order to reduce the channel gain variations among different devices,
$p_k$ can be controlled based on the large-scale fading component
to its dominant AP,
which is the AP with the largest channel gain \cite{Senel2018,Ganesan2021}.}

To express the received signal in \eqref{received}
more compactly, we denote an indicator of the device activity and delay
for each device as
\begin{eqnarray}\label{variable}
b_{k,t} =
\begin{cases}
1,~~&\text{if}~~a_k=1~~\text{and}~~t=t_k,\cr
0,~~&\text{otherwise},
\end{cases}\nonumber\\
\forall k=1,\ldots,K,~~\forall t=0,\ldots,T,
\end{eqnarray}
which means that $b_{k,t}=1$ if and only if device $k$ is active with a delay of
$t$ symbol durations.
Since there is at most one possible delay for each device in each transmission,
we have $\textcolor{black}{a_k=}\sum\limits_{t=0}^{T} b_{k, t} \in\{0,1\}, \forall k=1,\ldots,K$.
Thus, the received signal in \eqref{received} can be rewritten as
\begin{eqnarray}\label{received2}
\mathbf{Y}_m &=&
\sum_{k=1}^{K}\sum_{t=0}^{T} b_{k,t}\sqrt{p_kg_{k,m}}\mathbf{s}_{k,t}
\mathbf{h}_{k,m}^\mathrm{T}+\mathbf{W}_m\nonumber\\
&=&\sum_{k=1}^{K} \mathbf{S}_{k}\mathbf{B}_{k}
\mathbf{G}_{k,m}^{\frac{1}{2}}\mathbf{H}_{k,m}+\mathbf{W}_m,
~~\forall m=1,\ldots,M,\nonumber\\
\end{eqnarray}
where
$\mathbf{S}_k\triangleq\left[\mathbf{s}_{k,0}, \ldots, \mathbf{s}_{k,T}\right]\in\mathbb{C}^{(L+T)\times (T+1)}$ is the effective signature matrix of device $k$,
$\mathbf{B}_k\triangleq\text{diag}\left\{b_{k,0}, \ldots, b_{k,T}\right\}$,
$\mathbf{G}_{k,m}\triangleq\text{diag}\big\{\underbrace{p_kg_{k,m}, \ldots, p_kg_{k,m}}_{T+1}\big\}$,
and $\mathbf{H}_{k,m}\triangleq\big[\underbrace{\mathbf{h}_{k,m}, \ldots, \mathbf{h}_{k,m}}_{T+1}\big]^\mathrm{T}\in\mathbb{C}^{(T+1)\times N}$.

\subsection{Problem Formulation}
Mathematically, the asynchronous activity detection is equivalent to detecting each $b_{k, t}\in \{0,1\}$, which includes both the information of the device activity and its transmission delay (if it is active). Specifically, if $b_{k, t}$ is detected as $1$, we believe that the $k$-th device should be active with a delay of $t$ symbol durations.

For this purpose, we treat $\left\{\mathbf{S}_k\right\}_{k=1}^K$, $\left\{\mathbf{B}_k\right\}_{k=1}^K$, $\left\{\mathbf{G}_{k,m}\right\}_{k=1,m=1}^{k=K,m=M}$ in \eqref{received2} as deterministic, and treat the small-scale fading channel matrices $\left\{\mathbf{H}_{k,m}\right\}_{k=1,m=1}^{k=K,m=M}$ and the noise $\left\{\mathbf{W}_m\right\}_{m=1}^M$ as complex Gaussian random variables.
Consequently, \textcolor{black}{being a} linear combination of $\left\{\mathbf{H}_{k,m}\right\}_{k=1,m=1}^{k=K,m=M}$ and
$\left\{\mathbf{W}_m\right\}_{m=1}^M$, the received signal $\mathbf{Y}_m$ in
\eqref{received2} is also complex Gaussian distributed.
In particular, with $\mathbf{y}_{m,n}$ denoting
the $n$-th column of $\mathbf{Y}_m$, we have
$\mathbf{y}_{m,n}\sim\mathcal{C} \mathcal{N}\left(\mathbf{0}, \mathbf{C}_m\right)$, where
\begin{eqnarray}\label{covariance}
\mathbf{C}_m&=&\mathbb{E}\left[\mathbf{y}_{m,n}\mathbf{y}_{m,n}^\mathrm{H}\right]\nonumber\\
&=&\sum_{k=1}^{K} \mathbf{S}_{k}\mathbf{B}_{k}
\mathbf{G}_{k,m}^{\frac{1}{2}}\mathbf{E}\mathbf{G}_{k,m}^{\frac{1}{2}}\mathbf{B}_{k}\mathbf{S}_{k}^\mathrm{H}+\sigma_m^{2}\mathbf{I}_{L+T}\nonumber\\
&=&\sum_{k=1}^{K}\sum_{t=0}^{T} b_{k,t}
p_kg_{k,m}\mathbf{s}_{k,t}\mathbf{s}_{k,t}^\mathrm{H}+\sigma_m^{2}\mathbf{I}_{L+T},\nonumber\\
&&\forall m=1,\ldots,M.
\end{eqnarray}
\textcolor{black}{In \eqref{covariance},
$\mathbf{E}=\mathbb{E}\left[\mathbf{h}_{k,m,n}\mathbf{h}_{k,m,n}^\mathrm{H}\right]$ is
an all-one matrix, where $\mathbf{h}_{k,m,n}$ denotes the $n$-th column of $\mathbf{H}_{k,m}$.}
The last equality in
\eqref{covariance} holds since there is at most one non-zero entry in each diagonal matrix $\mathbf{B}_k$.
With $\mathbf{b}_k\triangleq\left[b_{k, 0}, \ldots, b_{k, T}\right]^{\mathrm{T}}$
denoting the diagonal entries of $\mathbf{B}_k$
and $\mathbf{b}\triangleq\left[\mathbf{b}_{1}^{\mathrm{T}}, \ldots, \mathbf{b}_{K}^{\mathrm{T}}\right]^{\mathrm{T}}$,
we can estimate $\mathbf{b}$
by maximizing the likelihood function
\begin{eqnarray}\label{likehood}
&&p\left(\left\{\mathbf{Y}_m\right\}_{m=1}^M;\mathbf{b}\right)\nonumber\\
&=& \prod_{m=1}^{M}\prod_{n=1}^{N} p\left(\mathbf{y}_{m,n};\mathbf{b}\right)\nonumber\\
&=&\prod_{m=1}^{M}\frac{1}{\vert\pi\mathbf{C}_m\vert^N}
\exp\left(-\text{Tr}\left(\mathbf{C}_m^{-1}\mathbf{Y}_m\mathbf{Y}_m^\mathrm{H}\right)\right)\textcolor{black}{.}
\end{eqnarray}
\textcolor{black}{Further considering the constraints of $\mathbf{b}$,}
the optimization problem \textcolor{black}{is given by}
\begin{subequations}\label{opt1}
\begin{align}
\begin{split}\label{opt1cost}
\min _{\mathbf{b}}~~
&\sum_{m=1}^M\left(\log\vert\mathbf{C}_m\vert+\frac{1}{N}\text{Tr}\left(\mathbf{C}_m^{-1}\mathbf{Y}_m\mathbf{Y}_m^\mathrm{H}\right)\right),
\end{split}\\
\begin{split}\label{opt1c1}
\text{s.t.}\ ~~
&\mathbf{b}\in\left[0,1\right]^{K(T+1)},~~~~~~~~~~~~~~~~~~~~~~~~~~~~
\end{split}\\
\begin{split}\label{opt1c2}
&\left\Vert\mathbf{b}_k\right\Vert_0\leq1,~~\forall k=1,\ldots,K,
\end{split}
\end{align}
\end{subequations}
where \eqref{opt1c1}
is a continuous relaxation of the binary constraint $\mathbf{b}\in\left\{0,1\right\}^{K(T+1)}$,
and \eqref{opt1c2} is \textcolor{black}{due to}
at most one possible delay
for each device in each transmission, i.e., $\sum\limits_{t=0}^{T} b_{k, t} \in\{0,1\}$.

\vspace{2mm}
\textcolor{black}{
\emph{Remark 1:}
The constraint \eqref{opt1c1}
is a reasonable relaxation due to three aspects.
Firstly, the constraint \eqref{opt1c2}
already guarantees that there is at most $1$ non-zero entry in each $\mathbf{b}_k$, which
means that at least $KT$ entries in $\mathbf{b}$ are guaranteed to locate at the boundary $0$.
Secondly, minimizing the first term of \eqref{opt1cost},
$\sum_{m=1}^M\log\vert\sum_{k=1}^{K}\sum_{t=0}^{T} b_{k,t}
p_kg_{k,m}\mathbf{s}_{k,t}\mathbf{s}_{k,t}^\mathrm{H}+\sigma_m^{2}\mathbf{I}_{L+T}\vert$,
has the effect of minimizing the rank of $\sum_{k=1}^{K}\sum_{t=0}^{T}b_{k,t}
p_kg_{k,m}\mathbf{s}_{k,t}\mathbf{s}_{k,t}^\mathrm{H}$ \cite{Fazel2003},
which will enforce most of $\mathbf{b}_k$ to be all-zero vectors as
demonstrated by the simulations in \cite{Haghighatshoar2018,chenIcc}.
Thirdly, after solving problem \eqref{opt1}, for the very few entries $b_{k,t}$
that are not at the boundary of $[0,1]$, we
adopt a threshold $\gamma\in[0,1]$ to recover the binary variable as
$\hat{b}_{k,t}=\mathbb{I}(b_{k,t}>\gamma)$.
By varying the threshold $\gamma$ in $[0, 1]$, we can achieve a good trade-off between the probability of missed detection and the probability of false alarm as shown in Section~VI.}

\vspace{2mm}
\textcolor{black}{
\emph{Remark 2:}
When the devices are equipped with multiple antennas,
the signature sequence of each device can be transmitted with the help of beamforming.
Specifically, let $N_{\text{r}}\geq1$ and $N_{\text{t}}\geq1$ denote the numbers of antennas
at each AP and each device, respectively.
Let $\tilde{\mathbf{H}}_{m,k}\in\mathbb{C}^{N_{\text{r}}\times N_{\text{t}}}$
denote the small-scale Rayleigh fading channel from the $k$-th device to the $m$-th AP.
Let $\mathbf{w}_k\in\mathbb{C}^{N_{\text{t}}}$ with $\Vert\mathbf{w}_k\Vert_2=1$
denote the beamforming vector at the $k$-th device. Consequently,
an effective channel from the $k$-th device to the $m$-th AP can be written as $\tilde{\mathbf{h}}_{m,k}=\tilde{\mathbf{H}}_{m,k}\mathbf{w}_k$.
Since $\Vert\mathbf{w}_k\Vert_2=1$ and each column of $\tilde{\mathbf{H}}_{m,k}$ follows i.i.d. $\mathcal{CN}(\mathbf{0}, \mathbf{\mathbf{I}}_{N_{\text{r}}})$,
we also have $\tilde{\mathbf{h}}_{m,k}\sim\mathcal{CN}(\mathbf{0}, \mathbf{\mathbf{I}}_{N_{\text{r}}})$,
which has the same probability distribution as that of the
small-scale channel $\mathbf{h}_{m,k}$ for $N_{\text{t}}=1$.
Therefore, with $\tilde{\mathbf{h}}_{m,k}$ replacing $\mathbf{h}_{m,k}$,
the detection problem for $N_{\text{t}}>1$ can still be formulated as problem \eqref{opt1},
and hence can be solved by the proposed algorithms in the following sections.}

\subsection{Penalized Reformulation of Problem \eqref{opt1}}
Problem \eqref{opt1} is challenging to solve due to the discontinuous and nonconvex $\ell_0$-norm constraints in \eqref{opt1c2} caused by the asynchronous \textcolor{black}{transmissions}.
Recently, \textcolor{black}{the} work \cite{Wang22} proposed a BCD algorithm
for single-cell asynchronous activity detection.
This algorithm enforces the $\ell_0$-norm constraints within the optimization process,
which can guarantee its feasibility to \eqref{opt1c2}.
However, since $\ell_0$-norm is highly nonconvex,
enforcing these hard constraints during the iterations may make
it \textcolor{black}{easily get} stuck at bad stationary points,
and hence degrades the detection performance.

Instead of explicitly enforcing the $\ell_0$-norm constraints in \eqref{opt1c2}, we transform problem
\eqref{opt1} into: 
\begin{eqnarray}\label{opt2}
\min_{\mathbf{b}\in\left[0,1\right]^{K(T+1)}}
&&\sum_{m=1}^M\left(\log\vert\mathbf{C}_m\vert+\frac{1}{N}\text{Tr}\left(\mathbf{C}_m^{-1}\mathbf{Y}_m\mathbf{Y}_m^\mathrm{H}\right)\right)\nonumber\\
&&+\rho\sum_{k=1}^K \left(\sum_{t=0}^Tb_{k,t}-\max_{t\in\{0,\ldots,T\}}b_{k,t}\right),
\end{eqnarray}
where $\rho>0$ is a penalty parameter to penalize the violation of \eqref{opt1c2}.
The following theorem establishes the equivalence between
the original problem \eqref{opt1} and the penalized problem \eqref{opt2}.

\newtheorem{theorem}{Theorem}
\begin{theorem}\label{equivalence}
The two problems \eqref{opt1} and \eqref{opt2} are equivalent in the sense
that there exists a finite $\rho^*<\infty$ such that for any $\rho>\rho^*$:
\textcolor{black}{
\begin{enumerate}[1)]
\item any stationary point of problem \eqref{opt2} must be a stationary point of problem \eqref{opt1};
\item the global optimal solutions of the two problems are identical.
\end{enumerate}}
\end{theorem}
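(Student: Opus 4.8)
The plan is to read \eqref{opt2} as an exact penalty reformulation of \eqref{opt1} and to exhibit an explicit threshold $\rho^\star$. Two structural facts drive the argument. First, the data-fitting objective $f(\mathbf{b}):=\sum_{m=1}^{M}\big(\log|\mathbf{C}_m|+\tfrac1N\mathrm{Tr}(\mathbf{C}_m^{-1}\mathbf{Y}_m\mathbf{Y}_m^{\mathrm{H}})\big)$ is continuously differentiable on an open neighborhood of the box $\mathcal{B}:=[0,1]^{K(T+1)}$, because $\mathbf{C}_m\succeq\sigma_m^2\mathbf{I}\succ\mathbf{0}$ throughout $\mathcal{B}$; hence $L_f:=\sup_{\mathbf{b}\in\mathcal{B}}\|\nabla f(\mathbf{b})\|_\infty<\infty$ and $f$ is $L_f$-Lipschitz on $\mathcal{B}$ in the $\ell_1$-norm. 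Second, the penalty $P(\mathbf{b}):=\sum_{k=1}^{K}\big(\sum_{t}b_{k,t}-\max_t b_{k,t}\big)$ is nonnegative and concave piecewise linear on $\mathcal{B}$, it vanishes exactly on the feasible set $\mathcal{F}$ of \eqref{opt1}, and --- the key quantitative ingredient --- it admits a sharp $\ell_1$ error bound: if $\tilde{\mathbf{b}}$ denotes the blockwise map that keeps in each $\mathbf{b}_k$ only one largest entry and zeros the rest, then $\tilde{\mathbf{b}}\in\mathcal{F}$ and $\|\mathbf{b}-\tilde{\mathbf{b}}\|_1=P(\mathbf{b})$. I would take $\rho^\star:=L_f$.

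For item 2), combining the two facts gives, for every $\mathbf{b}\in\mathcal{B}$, $F_\rho(\mathbf{b}):=f(\mathbf{b})+\rho P(\mathbf{b})\ge f(\tilde{\mathbf{b}})-L_f\|\mathbf{b}-\tilde{\mathbf{b}}\|_1+\rho P(\mathbf{b})=f(\tilde{\mathbf{b}})+(\rho-L_f)P(\mathbf{b})\ge f^\star$, where $f^\star:=\min_{\mathcal{F}}f$ and the last inequality uses $\tilde{\mathbf{b}}\in\mathcal{F}$, $\rho>L_f$, and $P\ge0$; moreover equality throughout forces $P(\mathbf{b})=0$, i.e. $\mathbf{b}\in\mathcal{F}$ with $f(\mathbf{b})=f^\star$. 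Since every global minimizer of \eqref{opt1} is feasible and attains $F_\rho=f^\star$, the two problems share the optimal value $f^\star$ and the same set of global minimizers.

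For item 1) I would proceed in two steps, with stationarity understood as B-stationarity as in the footnote. Step one: every B-stationary point $\mathbf{b}$ of \eqref{opt2} with $\rho>\rho^\star$ lies in $\mathcal{F}$. If it did not, some block would have $\|\mathbf{b}_k\|_0\ge2$, hence $P_k(\mathbf{b}_k)>0$; consider the direction $\mathbf{d}$ with $\mathbf{d}_k=\tilde{\mathbf{b}}_k-\mathbf{b}_k$ and $\mathbf{d}_j=\mathbf{0}$ for $j\ne k$. One checks that $\mathbf{d}$ lies in the tangent cone of $\mathcal{B}$ at $\mathbf{b}$, that $\|\mathbf{d}\|_1=P_k(\mathbf{b}_k)$, and that $P'(\mathbf{b};\mathbf{d})=-P_k(\mathbf{b}_k)$ (a short computation of the one-sided derivative of $\max_t b_{k,t}$, valid under any tie-breaking in the argmax); then $F_\rho'(\mathbf{b};\mathbf{d})=\nabla f(\mathbf{b})^{\mathrm{T}}\mathbf{d}-\rho P_k(\mathbf{b}_k)\le(L_f-\rho)P_k(\mathbf{b}_k)<0$, contradicting B-stationarity. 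Step two: given $\mathbf{b}\in\mathcal{F}$, observe that $\mathcal{T}_{\mathcal{F}}(\mathbf{b})\subseteq\mathcal{T}_{\mathcal{B}}(\mathbf{b})$ since $\mathcal{F}\subseteq\mathcal{B}$, and that $P$ has zero one-sided directional derivative along every $\mathbf{d}\in\mathcal{T}_{\mathcal{F}}(\mathbf{b})$: on each block, $\mathbf{b}_k$ is either $\mathbf{0}$ --- where directions in $\mathcal{T}_{\mathcal{F}_k}(\mathbf{b}_k)$ are nonnegative and supported on a single index, so $P_k$ stays $0$ --- or supported on one index $t^\star$ --- where such directions move only coordinate $t^\star$, so again $P_k$ stays $0$. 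Therefore, for every $\mathbf{d}\in\mathcal{T}_{\mathcal{F}}(\mathbf{b})$, which is in particular a feasible direction for \eqref{opt2}, B-stationarity of $\mathbf{b}$ for \eqref{opt2} yields $0\le F_\rho'(\mathbf{b};\mathbf{d})=\nabla f(\mathbf{b})^{\mathrm{T}}\mathbf{d}+\rho P'(\mathbf{b};\mathbf{d})=\nabla f(\mathbf{b})^{\mathrm{T}}\mathbf{d}$, which is precisely the B-stationarity condition for \eqref{opt1}.

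The main obstacle is item 1): one must compute the one-sided directional derivatives of the nonsmooth term $\max_t b_{k,t}$ exactly, including the tie cases in the argmax, and one must describe the tangent cone of the $\ell_0$-constrained set $\mathcal{F}$ correctly --- in particular at points where a block vanishes, where this cone is a nonconvex union of rays rather than a subspace or a halfspace. Once $f$ is recognized as $C^1$ (indeed $C^\infty$) on all of $\mathcal{B}$, the sharp error bound $\|\mathbf{b}-\tilde{\mathbf{b}}\|_1=P(\mathbf{b})$ and the Lipschitz estimate are routine, and item 2) is then immediate.
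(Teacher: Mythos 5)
Your proof is correct, and for item 1) it follows essentially the paper's route: write B-stationarity of problem \eqref{opt2} over the tangent cone of the box, evaluate the one-sided derivative of the $\max$ term (correctly handling ties in the argmax), bound the smooth term by a H\"older-type estimate $\nabla f(\mathbf{b})^{\mathrm{T}}\mathbf{d}\le \|\nabla f(\mathbf{b})\|_\infty\|\mathbf{d}\|_1$, and derive a contradiction from a descent direction whenever some block violates the $\ell_0$ constraint; the paper uses a single negative coordinate direction at a positive non-maximal entry, while you use the block direction $\tilde{\mathbf{b}}_k-\mathbf{b}_k$, an inessential difference, and both arguments then transfer stationarity to \eqref{opt1} by showing the penalty has zero directional derivative along the tangent cone of the $\ell_0$-feasible set, which is contained in that of the box. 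Two genuine differences are worth noting. First, your threshold $\rho^\star=\sup_{\mathbf{b}\in[0,1]^{K(T+1)}}\|\nabla f(\mathbf{b})\|_\infty$ is uniform over the box, whereas the paper defines $\rho^*$ as $\|\sum_m\nabla f_m(\mathbf{b}^*)\|_\infty$ at the particular stationary point under consideration and only remarks that it is finite; your uniform choice is the cleaner way to obtain a single finite $\rho^*$ valid simultaneously for all stationary points and for the global minimizers, which is what the theorem statement actually asserts. Second, for item 2) the paper first uses item 1) (a global minimizer of \eqref{opt2} is in particular stationary, hence feasible for \eqref{opt1}) and then compares objective values, whereas you give a self-contained exact-penalty argument built on the error bound $\|\mathbf{b}-\tilde{\mathbf{b}}\|_1=P(\mathbf{b})$ and the $\ell_1$-Lipschitz estimate for $f$; this buys an independent and slightly stronger conclusion (equal optimal values and coincidence of the full sets of global minimizers, without any detour through stationarity), at the modest cost of introducing the Lipschitz machinery that the paper avoids.
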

\begin{proof}
See Appendix \ref{proof of equivalence}.
\end{proof}

\vspace{2mm}

The equivalence implies that the \textcolor{black}{solution of the}
original problem \eqref{opt1} can be accomplished via solving the penalized problem \eqref{opt2}, which is easier to handle since problem \eqref{opt2} has only simple box constraints.
\textcolor{black}{In contrast to the original problem \eqref{opt1}, which
is very likely to get stuck at bad stationary points,
problem \eqref{opt2} has a larger feasible set than that of problem \eqref{opt1}
after removing the $\ell_0$-norm constraints},
making an iterative algorithm easier
to get around bad stationary points.
More importantly, this benefit comes without sacrificing the solution quality.
\textcolor{black}{In particular,} theorem~1 shows that when $\rho$ is sufficiently large,
as long as we solve problem \eqref{opt2} to a stationary point, it must also be
a stationary point \textcolor{black}{(and thus a feasible point)}
of problem \eqref{opt1}.
This means that an efficient algorithm for solving problem \eqref{opt2} can
satisfy \eqref{opt1c2} in a gentle fashion when approaching
a better stationary point.

\section{Centralized Algorithm for Asynchronous Activity Detection}
In this section, 
we propose an efficient algorithm for solving problem \eqref{opt2} to a stationary point,
which is also a stationary point of problem \eqref{opt1}.
This proposed algorithm is executed at the CPU for centralized detection,
which also provides a performance \textcolor{black}{reference} under ideal fronthauls.


Notice that the penalized problem \eqref{opt2} is a nonsmooth problem,
where the cost function is in the form of a differentiable part plus
a non-differentiable part, i.e.,
$-\rho\sum\limits_{k=1}^K \max\limits_{t\in\{0,\ldots,T\}}b_{k,t}$.
This type of nonsmooth problem can be tackled
by \textcolor{black}{the} proximal gradient method \cite{Parikh2014}. To solve problem \eqref{opt2},
\textcolor{black}{the} proximal gradient method adopts the following update
at the $i$-th iteration:
\begin{eqnarray}\label{PG}
\mathbf{b}^{(i)}
= \arg \min_{\mathbf{b} \in [0,1]^{K(T+1)} }&&\frac{1}{2 \eta_{i}}\left\Vert \mathbf{b}-\left(\mathbf{b}^{(i-1)}-\eta_i\mathbf{d}^{(i-1)}\right)\right\Vert_{2}^{2}\nonumber\\
&&-\rho\sum_{k=1}^K\max_{t\in\{0,\ldots,T\}}b_{k,t},
\end{eqnarray}
where $\eta_i$ is the step size and
$\mathbf{d}^{(i-1)}$ is
the gradient of the differentiable part \textcolor{black}{of} \textcolor{black}{problem} \eqref{opt2}
with respect to \textcolor{black}{$\mathbf{b}$ at} $\mathbf{b}^{(i-1)}$.
In particular, \textcolor{black}{the differentiable part of problem \eqref{opt2} is}
\begin{eqnarray}\label{smooth}
G_0(\mathbf{b})&\triangleq&\sum_{m=1}^M\left(\log\vert\mathbf{C}_m\vert+\frac{1}{N}\text{Tr}\left(\mathbf{C}_m^{-1}\mathbf{Y}_m\mathbf{Y}_m^\mathrm{H}\right)\right)
\nonumber\\&&+\rho\sum_{k=1}^K\sum_{t=0}^Tb_{k,t},
\end{eqnarray}
and
the $(k,t)$-th \textcolor{black}{element} of $\mathbf{d}^{(i-1)}$
\textcolor{black}{can be written as}
\begin{eqnarray}\label{dkt}
d^{(i-1)}_{k,t}
&=&\rho+\sum_{m=1}^M\Bigg(\mathbf{s}_{k,t}^\mathrm{H}\left(\mathbf{C}_m^{(i-1)}\right)^{-1}\mathbf{s}_{k,t}\nonumber\\
&&-\frac{1}{N}\mathbf{s}_{k,t}^\mathrm{H}\left(\mathbf{C}_m^{(i-1)}\right)^{-1}\mathbf{Y}_m\mathbf{Y}_m^\mathrm{H}\left(\mathbf{C}_m^{(i-1)}\right)^{-1}\mathbf{s}_{k,t}
\Bigg).\nonumber\\
\end{eqnarray}
While the update in \eqref{PG} still involves a nonsmooth and nonconvex problem,
the following proposition shows that its global optimal solution can be derived
in a closed form.

\newtheorem{proposition}{Proposition}
\begin{proposition}\label{update2}
The update in \eqref{PG} can be simplified in a closed form:
\begin{eqnarray}\label{closed}
b_{k,t}^{(i)}=
\begin{cases}
\textcolor{black}{\Pi_{[0,1]}}\left(\alpha_{k,t}^{(i)}+\eta_{i}\rho\right), &\text {if } t=\tau(k), \cr
\textcolor{black}{\Pi_{[0,1]}}\left(\alpha_{k,t}^{(i)}\right),  &\text {otherwise},
\end{cases}
\nonumber\\\forall k=1,\ldots,K,~~\forall t=0,\ldots,T,
\end{eqnarray}
where $\alpha_{k,t}^{(i)}\triangleq b_{k,t}^{(i-1)}-\eta_{i} d_{k,t}^{(i-1)}$,
$\tau(k)\in\arg\max\limits_{t\in\{0,\ldots,T\}}\alpha_{k,t}^{(i)}$,
\textcolor{black}{and $\Pi_{[0,1]}(\cdot)$ is the projection \textcolor{black}{onto} $[0,1]$.}
\end{proposition}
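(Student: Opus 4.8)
The plan is to use the separability of \eqref{PG} across the device index $k$ and to handle the remaining nonconvex $\max$ term by lifting it to a minimization over the index that attains the maximum. Since both the quadratic term and the penalty $-\rho\sum_k\max_t b_{k,t}$ decouple across $k$, and since multiplying the objective of \eqref{PG} by $\eta_i>0$ does not change its minimizer, it suffices to solve, for each $k$, the block problem $\min_{\mathbf{b}_k\in[0,1]^{T+1}}\bigl\{\tfrac12\lVert\mathbf{b}_k-\boldsymbol{\alpha}_k\rVert_2^2-c\max_{t\in\{0,\ldots,T\}}b_{k,t}\bigr\}$, where $\boldsymbol{\alpha}_k\triangleq\bigl(\alpha_{k,0}^{(i)},\ldots,\alpha_{k,T}^{(i)}\bigr)$ and $c\triangleq\eta_i\rho>0$.

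The first key step is the elementary identity $-c\max_{t}b_{k,t}=\min_{j\in\{0,\ldots,T\}}\bigl(-c\,b_{k,j}\bigr)$, valid for all $\mathbf{b}_k$ because $c>0$. This turns the block problem into $\min_{j\in\{0,\ldots,T\}}\,\min_{\mathbf{b}_k\in[0,1]^{T+1}}\bigl\{\tfrac12\lVert\mathbf{b}_k-\boldsymbol{\alpha}_k\rVert_2^2-c\,b_{k,j}\bigr\}$, after swapping the two (nested) minimizations. For a fixed $j$, the inner problem is a separable box-constrained quadratic: completing the square gives $\tfrac12(b_{k,j}-\alpha_{k,j})^2-c\,b_{k,j}=\tfrac12\bigl(b_{k,j}-(\alpha_{k,j}+c)\bigr)^2+\mathrm{const}$, so the inner minimizer is $b_{k,t}=\Pi_{[0,1]}(\alpha_{k,t})$ for $t\neq j$ and $b_{k,j}=\Pi_{[0,1]}(\alpha_{k,j}+c)$. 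This already has the form claimed in \eqref{closed}; it remains only to show that an optimal $j$ can be chosen in $\arg\max_t\alpha_{k,t}$.

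For this, I would substitute the inner minimizer back and isolate the part that depends on $j$. Writing $m_0(a)\triangleq\min_{v\in[0,1]}\bigl(\tfrac12 v^2-av\bigr)$, one checks $\tfrac12(\Pi_{[0,1]}(a)-a)^2=\tfrac12 a^2+m_0(a)$ and $\tfrac12(\Pi_{[0,1]}(a+c)-a)^2-c\,\Pi_{[0,1]}(a+c)=\tfrac12 a^2+m_0(a+c)$, so the optimal value of the inner problem equals $\sum_{t}\bigl(\tfrac12\alpha_{k,t}^2+m_0(\alpha_{k,t})\bigr)+\bigl(m_0(\alpha_{k,j}+c)-m_0(\alpha_{k,j})\bigr)$. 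Hence minimizing over $j$ reduces to minimizing $g(a)\triangleq m_0(a+c)-m_0(a)$ over $a\in\{\alpha_{k,0},\ldots,\alpha_{k,T}\}$. Because $m_0$ is an infimum of affine functions of $a$, it is concave and $C^1$ with $m_0'(a)=-\Pi_{[0,1]}(a)$; therefore $g(a)=\int_a^{a+c}m_0'(s)\,ds=-\int_a^{a+c}\Pi_{[0,1]}(s)\,ds$ is non-increasing in $a$ (the integrand is non-decreasing and the integration window has fixed length). Consequently $g$ is minimized at the largest $a$, i.e.\ at any $j\in\arg\max_t\alpha_{k,t}$, which is exactly $\tau(k)$. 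Plugging $j=\tau(k)$ into the inner minimizer yields \eqref{closed}, and reassembling the blocks over $k$ finishes the proof.

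I expect the only step requiring real care to be the monotonicity of $g$: the concavity/envelope argument above is the cleanest route, but one could instead verify directly that $g$ is non-increasing by examining the finitely many cases determined by the positions of $a$ and $a+c$ relative to $[0,1]$. Ties in $\arg\max_t\alpha_{k,t}$ cause no difficulty, since $g$ takes its minimal value at every such index, which is consistent with the statement $\tau(k)\in\arg\max_t\alpha_{k,t}^{(i)}$.
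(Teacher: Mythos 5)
Your proof is correct, and the key step is settled by a genuinely different argument from the paper's. Both proofs decouple \eqref{PG} across $k$ (after rescaling by $\eta_i>0$) and both observe that once the coordinate receiving the $-\eta_i\rho\, b_{k,j}$ bonus is fixed, the minimizer is the coordinatewise projection with the bonus added at that coordinate. The paper, however, writes the case split in terms of an index $\omega(k)\in\arg\max_{t}b^{(i)}_{k,t}$ attaining the maximum of the (unknown) solution itself, and then argues via monotonicity of $\Pi_{[0,1]}$ that this self-referential index can be replaced by $\tau(k)\in\arg\max_{t}\alpha^{(i)}_{k,t}$ --- a consistency-type argument. You instead lift the max into an explicit enumeration $\min_{j}$, solve each restricted box-constrained quadratic exactly, and compare the resulting optimal values through the value function $m_0(a)=\min_{v\in[0,1]}\bigl(\tfrac12 v^2-av\bigr)$, whose envelope derivative $m_0'(a)=-\Pi_{[0,1]}(a)$ makes $g(a)=m_0(a+c)-m_0(a)$ non-increasing, so the bonus is optimally assigned to a largest $\alpha_{k,t}$. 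Your route is a bit longer but delivers a direct global-optimality certificate by value comparison and avoids the mildly circular step in the paper where the closed form is first expressed through the optimizer's own argmax; the paper's route is shorter, needing only the monotonicity of the projection. Both treatments handle ties in $\arg\max_{t}\alpha^{(i)}_{k,t}$ correctly, and your exchange of the two minimizations and the identity $-c\max_t b_{k,t}=\min_j(-c\,b_{k,j})$ are valid since $c=\eta_i\rho>0$.
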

\begin{proof}
See Appendix \ref{proof of update2}.
\end{proof}

\vspace{2mm}

By iteratively updating $\mathbf{b}^{(i)}$ using $\eqref{closed}$,
the proposed algorithm for solving problem \eqref{opt2} is shown in Algorithm~1.
\textcolor{black}{While problem \eqref{opt2} is nonsmooth and nonconvex,
the} following theorem shows that Algorithm~1 is guaranteed to converge to
a stationary point.

\begin{algorithm}[t!]
\caption{Proposed \textcolor{black}{Centralized} Algorithm for Solving Problem \eqref{opt2}}
\begin{algorithmic}[1]\footnotesize
\State \textbf{Initialize} $\mathbf{b}^{(0)}$;\\
$\textbf{repeat}$ ($i=1,2,\ldots$)\\
~~~~Calculate the gradient $\mathbf{d}^{(i-1)}$ at
$\mathbf{b}^{(i-1)}$;\\
~~~~$\alpha_{k,t}^{(i)}=b_{k,t}^{(i-1)}-\eta_{i} d_{k,t}^{(i-1)},
\textcolor{black}{\forall k=1,\ldots,K}, \forall t=0,\ldots,T$;\\
~~~~Take any $\tau(k)\in\arg\max\limits_{t\in\{0,\ldots,T\}}\alpha_{k,t}^{(i)}, \textcolor{black}{\forall k=1,\ldots,K}$;\\
~~~~$
b_{k,t}^{(i)}=
\begin{cases}
\textcolor{black}{\Pi_{[0,1]}}\left(\alpha_{k,t}^{(i)}+\eta_{i} \rho\right), &\text {if } t=\tau(k) \cr
\textcolor{black}{\Pi_{[0,1]}}\left(\alpha_{k,t}^{(i)}\right),  &\text {otherwise}
\end{cases}, \textcolor{black}{\forall k=1,\ldots,K}$;\\
$\textbf{until}$ convergence
\end{algorithmic}
\end{algorithm}

\begin{theorem}\label{stationary}
When $\eta_i<1/L_{\textup{d}}$, with $L_{\textup{d}}$ denoting the Lipschitz constant of the gradient of \textcolor{black}{$G_0(\mathbf{b})$},
any limit point of the sequence generated by Algorithm~1 is
a stationary point of problem \eqref{opt2}.
\end{theorem}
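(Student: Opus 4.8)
The plan is to recognize Algorithm~1 as an \emph{exact} proximal‑gradient method for the composite objective $F(\mathbf b)\triangleq G_0(\mathbf b)-\rho\sum_{k=1}^{K}\max_{t}b_{k,t}$ of \eqref{opt2} over the compact box $\mathcal B\triangleq[0,1]^{K(T+1)}$, with $G_0$ as in \eqref{smooth}. Two preliminary observations set everything up. First, since $\mathbf C_m\succeq\sigma_m^2\mathbf I$ uniformly on $\mathcal B$ (see \eqref{covariance}), $\mathbf C_m^{-1}$ and hence $\nabla G_0$ in \eqref{dkt} are continuous and bounded on $\mathcal B$, and $G_0$ is $C^2$ with bounded Hessian on a neighbourhood of $\mathcal B$; this is precisely where the finite Lipschitz constant $L_{\textup{d}}$ of $\nabla G_0$ comes from, and it also makes $F$ continuous, hence bounded below, on $\mathcal B$. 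Second, by Proposition~\ref{update2} the iterate $\mathbf b^{(i)}$ of \eqref{closed} is a \emph{global} minimizer of \eqref{PG}; moreover, writing $\tau^{(i)}(k)\in\arg\max_{t}\alpha^{(i)}_{k,t}$ and using monotonicity of $\Pi_{[0,1]}(\cdot)$, one gets $b^{(i)}_{k,\tau^{(i)}(k)}=\max_t b^{(i)}_{k,t}$, so $\mathbf b^{(i)}$ is simultaneously the \emph{unique} minimizer of the \emph{convex} surrogate obtained from \eqref{PG} by replacing $\max_t b_{k,t}$ with the linear term $\langle\mathbf e_{\tau^{(i)}},\mathbf b\rangle$, where $\mathbf e_{\tau}\in\{0,1\}^{K(T+1)}$ has its $(k,t)$-entry equal to $1$ iff $t=\tau(k)$.

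First I would establish a sufficient-decrease inequality. Applying the descent lemma to $G_0$ along $\mathbf b^{(i-1)}\to\mathbf b^{(i)}$ and combining it with the optimality of $\mathbf b^{(i)}$ for \eqref{PG} tested against the feasible point $\mathbf b^{(i-1)}$, the linear terms $\langle\mathbf d^{(i-1)},\mathbf b^{(i)}-\mathbf b^{(i-1)}\rangle$ cancel while the $\max$-terms recombine with $G_0$ to reconstruct $F$ at both iterates, giving
\[
F(\mathbf b^{(i)})\le F(\mathbf b^{(i-1)})-\Big(\tfrac{1}{2\eta_i}-\tfrac{L_{\textup{d}}}{2}\Big)\big\|\mathbf b^{(i)}-\mathbf b^{(i-1)}\big\|_2^2 .
\]
Since $\eta_i<1/L_{\textup{d}}$ (and, as is customary, stays in a compact subinterval of $(0,1/L_{\textup{d}})$), the coefficient is positive; as $F$ is bounded below on $\mathcal B$, the sequence $\{F(\mathbf b^{(i)})\}$ converges and $\sum_i\|\mathbf b^{(i)}-\mathbf b^{(i-1)}\|_2^2<\infty$, hence $\|\mathbf b^{(i)}-\mathbf b^{(i-1)}\|_2\to 0$.

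Next I would take an arbitrary limit point: let $\mathbf b^{(i_j)}\to\mathbf b^{*}$ (such subsequences exist by compactness of $\mathcal B$). Then $\mathbf b^{(i_j-1)}\to\mathbf b^{*}$ too, and by continuity $\mathbf d^{(i_j-1)}=\nabla G_0(\mathbf b^{(i_j-1)})\to\nabla G_0(\mathbf b^{*})$. Because each $\tau^{(i_j)}$ lies in a finite set and each $\eta_{i_j}$ in a compact interval, I would refine the subsequence so that $\tau^{(i_j)}\equiv\tau^{*}$ is constant and $\eta_{i_j}\to\eta^{*}>0$. Passing $j\to\infty$ in the variational inequality characterizing the convex-surrogate minimizer $\mathbf b^{(i_j)}$,
\[
\big\langle \mathbf d^{(i_j-1)}-\rho\,\mathbf e_{\tau^{(i_j)}}+\tfrac{1}{\eta_{i_j}}\big(\mathbf b^{(i_j)}-\mathbf b^{(i_j-1)}\big),\; \mathbf b-\mathbf b^{(i_j)}\big\rangle\ge 0,\quad\forall\,\mathbf b\in\mathcal B,
\]
and using $\tfrac{1}{\eta_{i_j}}(\mathbf b^{(i_j)}-\mathbf b^{(i_j-1)})\to 0$, yields $\langle\nabla G_0(\mathbf b^{*})-\rho\,\mathbf e_{\tau^{*}},\,\mathbf b-\mathbf b^{*}\rangle\ge 0$ for all $\mathbf b\in\mathcal B$; moreover the coordinatewise inequalities $b^{(i_j)}_{k,\tau^{(i_j)}(k)}\ge b^{(i_j)}_{k,t}$ pass to the limit, so $\tau^{*}(k)\in\arg\max_t b^{*}_{k,t}$. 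Equivalently, $\mathbf 0\in\nabla G_0(\mathbf b^{*})-\rho\,\mathbf e_{\tau^{*}}+N_{\mathcal B}(\mathbf b^{*})$ with $\tau^{*}(k)$ attaining $\max_t b^{*}_{k,t}$, that is, $\mathbf b^{*}$ is a fixed point of the map \eqref{closed}; this is precisely the stationarity condition of \eqref{opt2} used in Theorem~\ref{equivalence}, so that (for $\rho>\rho^{*}$) the limit point is then also a stationary point of \eqref{opt1}.

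The main obstacle is the nonconvex, nonsmooth term $-\rho\sum_k\max_t b_{k,t}$. Three points need care: (i) legitimizing the reduction of the exact global solution of the \emph{nonconvex} proximal subproblem \eqref{PG} (supplied by Proposition~\ref{update2}) to the unique minimizer of a \emph{convex} linear-penalty surrogate, so that a usable variational inequality and first-order condition are at hand; (ii) the algorithmic selector $\tau^{(i)}$ need not converge, which I handle by extracting a subsequence along which it is constant; and (iii) verifying that this limiting selector $\tau^{*}$ remains active at $\mathbf b^{*}$, so that the limiting variational inequality certifies stationarity of the penalized problem \eqref{opt2} itself rather than merely of a looser linearization. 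The remaining ingredients — the descent lemma, the recombination into $F$, boundedness below, and compactness — are routine.
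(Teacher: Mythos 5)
Your first half (sufficient decrease from the descent lemma plus global optimality of the prox subproblem, summability, $\|\mathbf b^{(i)}-\mathbf b^{(i-1)}\|_2\to0$, compactness, extraction of a subsequence with constant selector and convergent step size) is essentially the paper's argument. The gap is in the last step. The limiting condition you derive --- $\langle\nabla G_0(\mathbf b^{*})-\rho\,\mathbf e_{\tau^{*}},\,\mathbf b-\mathbf b^{*}\rangle\ge0$ for all $\mathbf b\in[0,1]^{K(T+1)}$ together with $\tau^{*}(k)\in\arg\max_t b^{*}_{k,t}$ --- is \emph{not} ``precisely the stationarity condition of \eqref{opt2}.'' Stationarity of \eqref{opt2} (B-stationarity, as in \eqref{first-order}) involves the true directional derivative of $-\rho\max_t b_{k,t}$, namely $-\rho\max_{t\in\mathcal W(k)}\tilde d_{k,t}$ with $\mathcal W(k)=\arg\max_t b^{*}_{k,t}$, whereas your linearized condition only uses $-\rho\,\tilde d_{k,\tau^{*}(k)}$. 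Since $\tilde d_{k,\tau^{*}(k)}\le\max_{t\in\mathcal W(k)}\tilde d_{k,t}$, your condition is strictly weaker whenever the maximizer of $\mathbf b^{*}_k$ is not unique --- and ties are the generic situation here: every inactive device has $\mathbf b^{*}_k=\mathbf 0$, where all $T+1$ entries tie. Concretely, at such a block B-stationarity of \eqref{opt2} requires $\partial_{(k,t)}\sum_m f_m(\mathbf b^{*})\ge0$ for every $t$, while your inequality only yields $\partial_{(k,t)}\sum_m f_m(\mathbf b^{*})\ge-\rho$ for the non-selected coordinates (recall $\nabla G_0$ contains the extra $+\rho$ from \eqref{smooth}); a two-coordinate example with $\mathbf b^{*}=(0.5,0.5)$, $\nabla G_0(\mathbf b^{*})=(\rho,0)$, $\tau^{*}=1$ satisfies your condition but is not B-stationary, so the asserted equivalence fails as a logical step. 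Activity of the selector at $\mathbf b^{*}$ (your point (iii)) is necessary but not sufficient.

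The missing ingredient is that your selector satisfies the stronger property $\tau^{*}(k)\in\arg\max_t\alpha^{*}_{k,t}$ with $\alpha^{*}=\mathbf b^{*}-\eta^{*}\nabla G_0(\mathbf b^{*})$ (it is chosen as an argmax of $\alpha^{(i_j)}$, and this passes to the limit), not merely an argmax of $\mathbf b^{*}$. Using this, and passing the closed form \eqref{closed} to the limit, $\mathbf b^{*}$ is the output of the prox map at the limiting data, hence by Proposition~\ref{update2} a \emph{global} minimizer of the limiting nonconvex subproblem (the paper's \eqref{PGlim}, which retains the full $-\rho\sum_k\max_t b_{k,t}$ term). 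From that global optimality one takes $\mathbf b=\mathbf b^{*}+\tau\tilde{\mathbf d}$, $\tilde{\mathbf d}\in\mathcal T(\mathbf b^{*})$, divides by $\tau$ and lets $\tau\to0^{+}$, which produces the correct directional derivative of the max term and hence \eqref{first-order}. This is exactly how the paper closes the argument; your convex-surrogate variational inequality discards the tie information that this step needs, so you must either restore it as above or pass to the limit in the nonconvex subproblem directly.
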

\begin{proof}
See Appendix \ref{proof of stationary}.
\end{proof}

\vspace{2mm}

\textcolor{black}{Combining Theorem~1 and Theorem~2,
Algorithm~1 is not only guaranteed to converge to
a stationary point \textcolor{black}{problem} \eqref{opt2}, but also a stationary point of the original problem \eqref{opt1}.
Moreover, we}
can see from Algorithm~1 that $\mathbf{b}^{(i)}$
at each iteration is not enforced to satisfy the discontinuous nonconvex $\ell_0$-norm constraints in \eqref{opt1c2}, \textcolor{black}{but rather} these hard constraints are gradually satisfied to
reach a stationary point of problem \eqref{opt1}.
Therefore, the sequence \textcolor{black}{generated by the proposed algorithms} will \textcolor{black}{probably} not \textcolor{black}{get} stuck at bad stationary points
\textcolor{black}{caused} by the highly nonconvex $\ell_0$-norm. The performance gain over
the $\ell_0$-norm constrained BCD algorithm \cite{Wang22} will be shown through simulations in Section~VI.

\textcolor{black}{
The computational complexity of Algorithm~1 is
dominated by line~3, where the gradient is calculated
in \eqref{dkt}. Using the rank-1 update for the
matrix inverse in \eqref{dkt}, the computational complexity of line~3 is $\mathcal{O}\left(MK(T+1)(L+T)^2\right)$.
Thus, with $I$ denoting the iteration number, the overall
computational complexity of Algorithm~1 is $\mathcal{O}\left(IMK(T+1)(L+T)^2\right)$.}


\section{Distributed Algorithm for Asynchronous Activity Detection}
In this section, to reduce the computational \textcolor{black}{burden} at the CPU,
we further propose a distributed algorithm for
asynchronous activity detection.
Different from the centralized detection algorithm that is totally performed at the CPU,
the proposed distributed algorithm is executed \textcolor{black}{iteratively}
at both \textcolor{black}{the APs} and the CPU.
\textcolor{black}{At} each iteration,
each AP performs a local detection for the devices
and then sends the detection results to the CPU
for further processing.
\textcolor{black}{The combined result is then forwarded to the APs for the next iteration's computation.}

\textcolor{black}{First}, we transform problem \eqref{opt2} into an
equivalent consensus form:
\begin{subequations}\label{opt3}
\begin{eqnarray}\label{opt3cost}
\min_{\left\{\mathbf{x}_m\right\}_{m=1}^M, \mathbf{b}\in\left[0,1\right]^{K(T+1)}}
&&\rho\sum_{k=1}^K \left(\sum_{t=0}^Tb_{k,t}-\max_{t\in\{0,\ldots,T\}}b_{k,t}\right)\nonumber\\
&&+\sum_{m=1}^Mf_m\left(\mathbf{x}_{m}\right),
\end{eqnarray}
\begin{eqnarray}\label{opt3c1}
\text{s.t.}\
~~~~~~~~~~~&&\mathbf{x}_m=\mathbf{b},~~\forall m=1,\ldots,M,
\end{eqnarray}
\end{subequations}
where $f_m\left(\mathbf{x}_{m}\right)\triangleq\log\vert\tilde{\mathbf{C}}_m\vert+
\text{Tr}\left(\tilde{\mathbf{C}}_m^{-1}\mathbf{Y}_m\mathbf{Y}_m^\mathrm{H}\right)\textcolor{black}{/N}$,
and $\tilde{\mathbf{C}}_m$ is in the form of $\mathbf{C}_m$ but with
$\mathbf{x}_m$ replacing $\mathbf{b}$ in \eqref{covariance}.
\textcolor{black}{Notice that $f_m(\cdot)$ is a local function for the $m$-th AP and depends only on its own received signal $\mathbf{Y}_m$. This makes
problem \eqref{opt3} become a local problem when handling
$\mathbf{x}_{m}$ with a fixed $\mathbf{b}$.}

To solve problem \eqref{opt3} in a distributed manner, we
write its augmented Lagrangian function:
\begin{eqnarray}\label{ALF}
&&\mathcal{L}\left(\left\{\mathbf{x}_m\right\}_{m=1}^M, \mathbf{b};
\left\{\boldsymbol{\lambda}_m\right\}_{m=1}^M
\right)\nonumber\\
&=&\sum\limits_{m=1}^Mf_m\left(\mathbf{x}_{m}\right)
+\rho\sum\limits_{k=1}^K \left(\sum\limits_{t=0}^Tb_{k,t}-\max\limits_{t\in\{0,\ldots,T\}}b_{k,t}\right)
\nonumber\\
&&+\sum\limits_{m=1}^{M}\boldsymbol{\lambda}_m^\mathrm{T}\left(\mathbf{x}_m-\mathbf{b}\right)
+\frac{\mu}{2}\sum\limits_{m=1}^{M}\left\Vert\mathbf{x}_m-\mathbf{b}\right\Vert_2^2,
\end{eqnarray}
where each $\boldsymbol{\lambda}_m\in\mathbb{R}^{K(T+1)}$ is a dual variable corresponding to the
equality constraint $\mathbf{x}_m=\mathbf{b}$,
and $\mu>0$ is a penalty parameter to penalize the violation
of all the constraints in \eqref{opt3c1}.
\textcolor{black}{The appearance of \eqref{opt3} might suggest to use}
the classical ADMM algorithm \cite{Boyd2011},
which minimizes the augmented Lagrangian function \eqref{ALF} over
$\mathbf{b}$ and
$\left\{\mathbf{x}_m\right\}_{m=1}^M$ \textcolor{black}{alternatingly}.
However, since the term
$\rho\sum\limits_{k=1}^K \max\limits_{t\in\{0,\ldots,T\}}b_{k,t}$ in
\eqref{ALF} is nonsmooth and nonconvex, the classical ADMM algorithm cannot
guarantee its convergence.

To guarantee the convergence, we add an additional \textcolor{black}{proximal} term $\delta/2\left\Vert\mathbf{b}-\mathbf{b}^{(i-1)}\right\Vert_2^2$
to \textcolor{black}{\eqref{ALF}, making} the subproblem with respect to $\mathbf{b}$ at the $i$-th iteration \textcolor{black}{appear as}
\begin{eqnarray}\label{ADMM1}
&\min\limits_{\mathbf{b}\in\left[0,1\right]^{K(T+1)}}&
\rho\sum_{k=1}^K \left(\sum_{t=0}^Tb_{k,t}-\max_{t\in\{0,\ldots,T\}}b_{k,t}\right)
\nonumber\\&&+\sum_{m=1}^{M}\left(\boldsymbol{\lambda}_m^{(i-1)}\right)^\mathrm{T}\left(\mathbf{x}_m^{(i-1)}-\mathbf{b}\right)
\nonumber\\
&&+\frac{\mu}{2}\sum_{m=1}^{M}\left\Vert\mathbf{x}_m^{(i-1)}-\mathbf{b}\right\Vert_2^2+\frac{\delta}{2}\left\Vert\mathbf{b}-\mathbf{b}^{(i-1)}\right\Vert_2^2,
\nonumber\\
\end{eqnarray}
where $\delta>0$ is a parameter for \textcolor{black}{controlling the} convergence.
Although subproblem \eqref{ADMM1} is still nonsmooth and nonconvex,
the following proposition shows that its global optimal solution can be derived
in a closed form, \textcolor{black}{which can be proved using the same \textcolor{black}{argument as in} Appendix~\ref{proof of update2}.}

\begin{proposition}\label{ADMMclosed}
The closed-form solution of subproblem \eqref{ADMM1} is given by
\begin{eqnarray}\label{ADMMsolution1}
b_{k,t}^{(i)}=
\begin{cases}
\textcolor{black}{\Pi_{[0,1]}}\left(\beta_{k,t}^{(i)}+\frac{\rho}{\delta+M\mu}\right),~~&\text{if}~~t=\upsilon(k),
\cr
\textcolor{black}{\Pi_{[0,1]}}\left(\beta_{k,t}^{(i)}\right),~~& \text {otherwise},
\end{cases}\nonumber\\
\forall k=1,\ldots,K,~~\forall t=0,\ldots,T,
\end{eqnarray}
where $\beta_{k,t}^{(i)}\triangleq\left(\delta b_{k,t}^{(i-1)}+\sum_{m=1}^M\left(\mu x_{m,k,t}^{(i-1)}+\lambda_{m,k,t}^{(i-1)}\right)-\rho\right)$\\$/\left(\delta+M\mu\right)$ and
$\upsilon(k)\in\arg\max\limits_{t\in\{0,\ldots,T\}}\beta_{k,t}^{(i)}$.
\end{proposition}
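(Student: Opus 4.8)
The plan is to reduce subproblem \eqref{ADMM1} to exactly the form already handled in Appendix~\ref{proof of update2}. Observe first that, apart from additive constants, both the penalty term and the quadratic/linear terms in \eqref{ADMM1} are separable across the device index $k$: the coupling through $\mathbf{x}_m^{(i-1)}$, $\boldsymbol{\lambda}_m^{(i-1)}$ and $\mathbf{b}^{(i-1)}$ is entrywise, and $\max_{t}b_{k,t}$ involves only the block $\mathbf{b}_k$. Hence the minimization decomposes into $K$ independent problems, one per $\mathbf{b}_k=[b_{k,0},\ldots,b_{k,T}]^{\mathrm T}\in[0,1]^{T+1}$.

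For a fixed $k$, I would collect the terms of \eqref{ADMM1} depending on $b_{k,t}$: the coefficient of $b_{k,t}^2$ is $(M\mu+\delta)/2>0$, and the coefficient of $b_{k,t}$ is $\rho-\sum_{m=1}^M\lambda_{m,k,t}^{(i-1)}-\mu\sum_{m=1}^M x_{m,k,t}^{(i-1)}-\delta b_{k,t}^{(i-1)}$. Completing the square, the $k$-th subproblem becomes, up to a constant independent of $\mathbf{b}_k$,
\[
\min_{\mathbf{b}_k\in[0,1]^{T+1}}\ \frac{M\mu+\delta}{2}\big\|\mathbf{b}_k-\boldsymbol{\beta}_k^{(i)}\big\|_2^2-\rho\max_{t\in\{0,\ldots,T\}}b_{k,t},
\]
with $\boldsymbol{\beta}_k^{(i)}=[\beta_{k,0}^{(i)},\ldots,\beta_{k,T}^{(i)}]^{\mathrm T}$ and $\beta_{k,t}^{(i)}$ exactly as stated in the proposition.

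This is precisely the per-block problem solved in Appendix~\ref{proof of update2}, but with $1/\eta_i$ replaced by $M\mu+\delta$ and $\alpha_{k,t}^{(i)}$ replaced by $\beta_{k,t}^{(i)}$. I would therefore reuse the same argument: since $\rho>0$, we have $-\rho\max_{t}b_{k,t}=\min_{j}(-\rho b_{k,j})$, which turns the problem into $\min_j\min_{\mathbf{b}_k\in[0,1]^{T+1}}\{\frac{M\mu+\delta}{2}\|\mathbf{b}_k-\boldsymbol{\beta}_k^{(i)}\|_2^2-\rho b_{k,j}\}$; for each fixed $j$ the inner problem is separable with minimizer $b_{k,j}=\Pi_{[0,1]}(\beta_{k,j}^{(i)}+\rho/(M\mu+\delta))$ and $b_{k,t}=\Pi_{[0,1]}(\beta_{k,t}^{(i)})$ for $t\neq j$; one then compares the resulting optimal values over $j$ and shows the minimizer is $j=\upsilon(k)\in\arg\max_t\beta_{k,t}^{(i)}$, noting that monotonicity of $\Pi_{[0,1]}$ makes this choice consistent (i.e.\ $b_{k,\upsilon(k)}\ge b_{k,t}$ for all $t$, so $\upsilon(k)$ indeed attains the max).

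The main obstacle is this last comparison step: after substituting the per-$j$ minimizers, the optimal value as a function of $j$ equals a $j$-independent constant plus a function of $\beta_{k,j}^{(i)}$ alone, and one must check that this function is non-increasing in $\beta_{k,j}^{(i)}$, which needs a short case analysis according to whether $\beta_{k,j}^{(i)}$ and $\beta_{k,j}^{(i)}+\rho/(M\mu+\delta)$ lie below $0$, inside $[0,1]$, or above $1$. Since this is exactly the calculation already carried out for Proposition~\ref{update2}, nothing new needs to be proved here once the square is completed; the closed form \eqref{ADMMsolution1} follows verbatim under the substitution $\eta_i\rho\mapsto\rho/(M\mu+\delta)$.
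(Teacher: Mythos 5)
Your proposal is correct and follows essentially the same route as the paper: the paper also proves Proposition~\ref{ADMMclosed} by completing the square in \eqref{ADMM1} to rewrite it as $\tfrac{\delta+M\mu}{2}\bigl\Vert\mathbf{b}-\boldsymbol{\beta}^{(i)}\bigr\Vert_2^2-\rho\sum_{k}\max_{t}b_{k,t}$ over $[0,1]^{K(T+1)}$, i.e.\ exactly the form of \eqref{PG} with $\eta_i$ replaced by $1/(\delta+M\mu)$ and $\alpha_{k,t}^{(i)}$ by $\beta_{k,t}^{(i)}$, and then invokes Proposition~\ref{update2}. Your per-$k$ decomposition, the quadratic and linear coefficients, and the resulting substitution $\eta_i\rho\mapsto\rho/(\delta+M\mu)$ all check out, so the closed form \eqref{ADMMsolution1} follows as claimed.
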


On the other hand, the subproblem with respect to $\left\{\mathbf{x}_m\right\}_{m=1}^M$
at the $i$-th iteration
can be decomposed into $M$ parallel subproblems, with each written as
\begin{equation}\label{ADMM2}
\min _{\mathbf{x}_m}~~
f_m\left(\mathbf{x}_{m}\right)
+\left(\boldsymbol{\lambda}_m^{(i-1)}\right)^\mathrm{T}
\left(\mathbf{x}_m-\mathbf{b}^{(i)}\right)
+\frac{\mu}{2}\left\Vert\mathbf{x}_m-\mathbf{b}^{(i)}\right\Vert_2^2.
\end{equation}
\textcolor{black}{Compared}
to the single-cell device activity detection problem \cite{Haghighatshoar2018,chenIcc,icasspW},
\textcolor{black}{subproblem \eqref{ADMM2} only differs in the}
additional linear and quadratic terms.
\textcolor{black}{Therefore,}
subproblem \eqref{ADMM2} can be solved
\textcolor{black}{in a similar way to \cite{Haghighatshoar2018,chenIcc,icasspW}}
by updating each coordinate of $\mathbf{x}_m$
sequentially with the coordinate descent algorithm.
In particular, with other coordinates fixed,
$x_{m,k,t}$ is updated by solving
\begin{eqnarray}\label{cd}
\min _{x_{m,k,t}}~~&&\log\left(1 + \xi_{1}x_{m,k,t}\right)-\frac{\xi_{2}x_{m,k,t}}{1+\xi_{1}x_{m,k,t}}
\nonumber\\
&&+\lambda_{m,k,t}^{(i-1)}
\left(x_{m,k,t}-b_{k,t}^{(i)}\right)+\frac{\mu}{2}\left(x_{m,k,t}-b_{k,t}^{(i)}\right)^2,\nonumber\\
\end{eqnarray}
where $\xi_{1} \triangleq p_kg_{k,m}\mathbf{s}_{k,t}^{\mathrm{H}}\mathbf{D}_{m,k,t}^{-1}\mathbf{s}_{k,t}$,
$\xi_{2} \triangleq p_kg_{k,m}/{N}\mathbf{s}_{k,t}^{\mathrm{H}} \mathbf{D}_{m,k,t}^{-1}\mathbf{Y}_m\mathbf{Y}_m^\mathrm{H}  \mathbf{D}_{m,k,t}^{-1} \mathbf{s}_{k,t}$,
and
 $\mathbf{D}_{m,k,t} \triangleq \sum\limits_{(\bar{k}, \bar{t}) \neq (k,t)}$
 $x_{m,\bar{k},\bar{t}}p_{\bar{k}}g_{\bar{k},m}\mathbf{s}_{\bar{k},\bar{t}}\mathbf{s}_{\bar{k},\bar{t}}^{\mathrm{H}} + \sigma_{m}^{2}\mathbf{I}_{L+T}$.
Notice that $\xi_{1}$ and $\xi_{2}$
depend on $\left(m,k,t\right)$.
For notational simplicity, the $\left(m,k,t\right)$ dependence is not explicitly stated.
Setting the gradient of \eqref{cd} to zero yields
\begin{eqnarray}\label{grad}
&&\left(1 + \xi_{1}x_{m,k,t}\right)\xi_{1}-\xi_{2}+\lambda_{m,k,t}^{(i-1)}\left(1 + \xi_{1}x_{m,k,t}\right)^2\nonumber\\
&&+\mu\left(x_{m,k,t}-b_{k,t}^{(i)}\right)\left(1 + \xi_{1}x_{m,k,t}\right)^2=0,
\end{eqnarray}
whose roots can be expressed using the cubic formula. Consequently,
the optimal solution of \eqref{cd} is obtained by selecting the root
with the minimum cost function value.

After updating $\mathbf{b}$ and $\left\{\mathbf{x}_m\right\}_{m=1}^M$,
the dual variables are updated by a dual ascent step:
\begin{equation}\label{dual}
\boldsymbol{\lambda}_m^{(i)}=
\boldsymbol{\lambda}_m^{(i-1)}+\mu\left(\mathbf{x}_m^{(i)}-\mathbf{b}^{(i)}\right),~~
\forall m=1,\ldots,M.
\end{equation}
By \textcolor{black}{iteratively} updating the primal and dual variables,
the proposed distributed algorithm for solving problem \eqref{opt3} is summarized in Algorithm~2.
The following theorem shows that Algorithm~2 is guaranteed to converge to
a stationary point of problem \eqref{opt3}.

\begin{algorithm}[t!]
\caption{Proposed \textcolor{black}{Distributed} Algorithm for Solving Problem \eqref{opt3}}
\begin{algorithmic}[1]\footnotesize
\State \textbf{Initialize} $\mathbf{b}^{(0)}$, $\left\{\mathbf{x}_m^{(0)}\right\}_{m=1}^M$, and $\left\{\boldsymbol{\lambda}_m^{(0)}\right\}_{m=1}^M$;\\
$\textbf{repeat}$ ($i=1,2,\ldots$)\\
~~~~\textcolor{black}{Each AP $m$ sends $\mu\mathbf{x}_m^{(i-1)}+\boldsymbol{\lambda}_m^{(i-1)}$ to the CPU, $\forall m=1,\ldots,M$};\\
~~~~The CPU updates $\mathbf{b}^{(i)}$ with \eqref{ADMMsolution1};\\
~~~~\textcolor{black}{The CPU broadcasts $\mathbf{b}^{(i)}$ to each AP $m$, $\forall m=1,\ldots,M$};\\
~~~~\textcolor{black}{Each AP $m$} updates $\mathbf{x}_m^{(i)}$ \textcolor{black}{by solving problem \eqref{ADMM2} with}
the coordinate descent algorithm, $\forall m=1,\ldots,M$;\\
~~~~\textcolor{black}{Each AP $m$} updates $\boldsymbol{\lambda}_m^{(i)}$ with \eqref{dual}, $\forall m=1,\ldots,M$;\\
$\textbf{until}$ convergence
\end{algorithmic}
\end{algorithm}

\begin{theorem}\label{stationary2}
When $\mu>2L_{m}$, with $L_{m}$ denoting the Lipschitz constant of $\nabla f_m\left(\mathbf{x}_{m}\right)$,
any limit point of the sequence generated by Algorithm~2 is a stationary point of problem \eqref{opt3}.
\end{theorem}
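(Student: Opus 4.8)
The plan is to follow the standard potential‑function argument for (proximal) ADMM on nonsmooth, nonconvex problems: identify a merit function that decreases sufficiently along the iterates, show it is bounded below, conclude that successive differences of all variables vanish, and then pass to a limit along a convergent subsequence to recover the stationarity conditions of \eqref{opt3}. The natural merit function is the augmented Lagrangian $\mathcal{L}$ in \eqref{ALF}, and throughout I use the hypothesis $\mu>2L_m$, which in particular gives $\mu>L_m$.

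The first key observation is a clean dual identity. Since $\mu>L_m$, subproblem \eqref{ADMM2} is $(\mu-L_m)$‑strongly convex, so the coordinate‑descent loop in line~6 reaches its unique global minimizer $\mathbf{x}_m^{(i)}$, which satisfies $\nabla f_m(\mathbf{x}_m^{(i)})+\boldsymbol{\lambda}_m^{(i-1)}+\mu(\mathbf{x}_m^{(i)}-\mathbf{b}^{(i)})=\mathbf{0}$; substituting the dual update \eqref{dual} yields $\boldsymbol{\lambda}_m^{(i)}=-\nabla f_m(\mathbf{x}_m^{(i)})$, and hence $\|\boldsymbol{\lambda}_m^{(i)}-\boldsymbol{\lambda}_m^{(i-1)}\|\le L_m\|\mathbf{x}_m^{(i)}-\mathbf{x}_m^{(i-1)}\|$ by $L_m$‑Lipschitz continuity of $\nabla f_m$. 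Next I would bound the change of $\mathcal{L}$ block by block: the $\mathbf{b}$‑update minimizes \eqref{ADMM1} exactly (Proposition~3), so evaluating at $\mathbf{b}=\mathbf{b}^{(i-1)}$ shows it decreases $\mathcal{L}$ by at least $\tfrac{\delta}{2}\|\mathbf{b}^{(i)}-\mathbf{b}^{(i-1)}\|^2$ (this is precisely the role of the added proximal term, and it bypasses the nonconvexity of the $\mathbf{b}$‑subproblem); the $\mathbf{x}_m$‑update decreases $\mathcal{L}$ by at least $\tfrac{\mu-L_m}{2}\|\mathbf{x}_m^{(i)}-\mathbf{x}_m^{(i-1)}\|^2$ by strong convexity; and the dual update increases $\mathcal{L}$ by exactly $\sum_m \tfrac1\mu\|\boldsymbol{\lambda}_m^{(i)}-\boldsymbol{\lambda}_m^{(i-1)}\|^2\le \sum_m \tfrac{L_m^2}{\mu}\|\mathbf{x}_m^{(i)}-\mathbf{x}_m^{(i-1)}\|^2$. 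Summing, $\mathcal{L}^{(i-1)}-\mathcal{L}^{(i)}\ge \tfrac{\delta}{2}\|\mathbf{b}^{(i)}-\mathbf{b}^{(i-1)}\|^2+\sum_m\big(\tfrac{\mu-L_m}{2}-\tfrac{L_m^2}{\mu}\big)\|\mathbf{x}_m^{(i)}-\mathbf{x}_m^{(i-1)}\|^2$, and since $\tfrac{\mu-L_m}{2}-\tfrac{L_m^2}{\mu}=\tfrac{(\mu-2L_m)(\mu+L_m)}{2\mu}>0$ exactly when $\mu>2L_m$, there is a constant $c>0$ with $\mathcal{L}^{(i-1)}-\mathcal{L}^{(i)}\ge c\big(\|\mathbf{b}^{(i)}-\mathbf{b}^{(i-1)}\|^2+\sum_m\|\mathbf{x}_m^{(i)}-\mathbf{x}_m^{(i-1)}\|^2\big)$.

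To close the argument I would then establish that $\mathcal{L}^{(i)}$ is bounded below: using $\boldsymbol{\lambda}_m^{(i)}=-\nabla f_m(\mathbf{x}_m^{(i)})$ together with the descent lemma for $f_m$ applied around $\mathbf{b}^{(i)}$, one gets $f_m(\mathbf{x}_m^{(i)})+(\boldsymbol{\lambda}_m^{(i)})^{\mathrm T}(\mathbf{x}_m^{(i)}-\mathbf{b}^{(i)})+\tfrac\mu2\|\mathbf{x}_m^{(i)}-\mathbf{b}^{(i)}\|^2\ge f_m(\mathbf{b}^{(i)})$, hence $\mathcal{L}^{(i)}\ge \sum_m f_m(\mathbf{b}^{(i)})+\rho\sum_k(\sum_t b_{k,t}^{(i)}-\max_t b_{k,t}^{(i)})$, which is bounded below because $\mathbf{b}^{(i)}\in[0,1]^{K(T+1)}$ is confined to a compact box, $\tilde{\mathbf{C}}_m\succeq\sigma_m^2\mathbf{I}\succ0$ keeps each $f_m$ continuous there, and the penalty term is non‑negative. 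Combined with the sufficient decrease, $\{\mathcal{L}^{(i)}\}$ is nonincreasing and bounded below, so it converges, forcing $\|\mathbf{b}^{(i)}-\mathbf{b}^{(i-1)}\|\to0$ and $\|\mathbf{x}_m^{(i)}-\mathbf{x}_m^{(i-1)}\|\to0$; by the first step $\|\boldsymbol{\lambda}_m^{(i)}-\boldsymbol{\lambda}_m^{(i-1)}\|\to0$ as well, and since $\mathbf{x}_m^{(i)}-\mathbf{b}^{(i)}=\tfrac1\mu(\boldsymbol{\lambda}_m^{(i)}-\boldsymbol{\lambda}_m^{(i-1)})$, the consensus residual vanishes. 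Boundedness of the full sequence follows ($\mathbf{b}^{(i)}$ in a box, $\mathbf{x}_m^{(i)}=\mathbf{b}^{(i)}+\tfrac1\mu(\boldsymbol{\lambda}_m^{(i)}-\boldsymbol{\lambda}_m^{(i-1)})$ bounded, $\boldsymbol{\lambda}_m^{(i)}=-\nabla f_m(\mathbf{x}_m^{(i)})$ bounded by continuity), so a subsequence converges to some $(\{\mathbf{x}_m^\star\},\mathbf{b}^\star,\{\boldsymbol{\lambda}_m^\star\})$. On this limit, feasibility $\mathbf{x}_m^\star=\mathbf{b}^\star$ holds from the vanishing residual, $\nabla f_m(\mathbf{x}_m^\star)+\boldsymbol{\lambda}_m^\star=\mathbf{0}$ follows by continuity, and since each $\mathbf{b}^{(i)}$ globally solves \eqref{ADMM1} over the box, its first‑order optimality condition (in the sense of the footnote after Theorem~1) passes to the limit using $\mathbf{b}^{(i)}-\mathbf{b}^{(i-1)}\to0$, $\mathbf{x}_m^{(i-1)}\to\mathbf{x}_m^\star$, $\boldsymbol{\lambda}_m^{(i-1)}\to\boldsymbol{\lambda}_m^\star$, and outer semicontinuity of the subdifferential of the piecewise‑linear term $-\rho\max_t b_{k,t}$; together these are the stationarity conditions of \eqref{opt3}.

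The main obstacle is the $\mathbf{b}$‑block: because $-\rho\max_t b_{k,t}$ is concave and nonsmooth, subproblem \eqref{ADMM1} is nonconvex, so the usual convex‑ADMM machinery does not apply, and I must both (i) extract sufficient decrease for that block purely from the exact closed‑form minimizer of Proposition~3 plus the proximal term $\tfrac{\delta}{2}\|\mathbf{b}-\mathbf{b}^{(i-1)}\|^2$, and (ii) pass the nonsmooth optimality condition to the limit via outer semicontinuity of the subdifferential rather than continuity of a gradient. Everything else is routine potential‑function bookkeeping, with the precise constant $\mu>2L_m$ arising exactly from balancing the strong‑convexity decrease of the $\mathbf{x}_m$‑update against the dual‑ascent increase controlled by the Lipschitz bound in the first step.
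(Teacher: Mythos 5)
Your proposal is correct and follows essentially the same route as the paper's proof: use the augmented Lagrangian as a potential function, extract sufficient decrease from the proximal $\mathbf{b}$-update and the strongly convex $\mathbf{x}_m$-subproblems, control the dual ascent via $\boldsymbol{\lambda}_m^{(i)}=-\nabla f_m(\mathbf{x}_m^{(i)})$ and Lipschitz continuity, establish a lower bound by comparing with $\sum_m f_m(\mathbf{b}^{(i)})$, and pass the optimality conditions to a limit point. The only differences are cosmetic: you invoke the sharper strong-convexity modulus $\mu-L_m$ (yielding the coefficient $\tfrac{(\mu-2L_m)(\mu+L_m)}{2\mu}$ where the paper uses modulus $L_m$ and $\tfrac{L_m(\mu-2L_m)}{2\mu}$), and you phrase the limit passage for the $\mathbf{b}$-block via outer semicontinuity of the subdifferential where the paper works directly with directional derivatives and the tangent-cone inclusion $\mathcal{T}(\mathbf{b}^{*})\subseteq\mathcal{T}(\mathbf{b}^{(i)})$.
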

\begin{proof}
See Appendix \ref{proof of stationary2}.
\end{proof}

\vspace{2mm}

Notice that problem \eqref{opt3} is an equivalent reformulation of problem \eqref{opt2}, which is also equivalent to the original problem \eqref{opt1} (see Theorem~1).
Therefore, we can conclude that Algorithm~2 is also guaranteed to converge to a stationary point of problem \eqref{opt1}.

\textcolor{black}{The computational complexity of Algorithm~2 is
dominated by line~6, where the coordinate descent algorithm is applied to solve \eqref{ADMM2}
with computational complexity $\mathcal{O}\left(K(T+1)(L+T)^2\right)$ at each AP \cite{chenIcc}. Due to the multi-AP parallel computation, with $I$ denoting the iteration number,
the time complexity of Algorithm~2 is $\mathcal{O}\left(IK(T+1)(L+T)^2\right)$.}

\section{\textcolor{black}{Communication-Efficient Enhancement for Algorithm~2}}
In \textcolor{black}{Algorithm~1 and Algorithm~2},
we assume that the received signals \textcolor{black}{or the local detection results}
can be accurately collected at the CPU.
In practice, the APs and the CPU
are connected by capacity-limited fronthaul links.
Therefore, the \textcolor{black}{exchanged contents}
have to be compressed before being \textcolor{black}{transmitted.
In Algorithm~1, the received signals forwarded to the CPU}
are in a large dynamic range due to the randomness of the activities, delays, and channels.
\textcolor{black}{Thus,} the \textcolor{black}{required} number of bits for compression
\textcolor{black}{could} be very large in order to maintain \textcolor{black}{a} high
fidelity. If the compression error is large, it inevitably degrades the detection performance.
\textcolor{black}{On the other hand, in Algorithm~2},
we can see that instead of sending the received signals,
each AP only requires to forward the local detection results to the CPU.
However, as a distributed algorithm, Algorithm~2 requires communications
between the APs and the CPU at each iteration.
Therefore, the communication overheads of Algorithm~2
are affected by its required iteration number for convergence.
This brings a difficult dilemma of reducing the capacity requirements of the fronthaul links while still achieving a satisfactory detection performance.

In \textcolor{black}{this} section, we resolve this dilemma by \textcolor{black}{
modifying Algorithm~2 such that the convergence is accelerated and
the exchanged variables appear in a much smaller dynamic range,
which reduces the required
number of bits for compression} without sacrificing \textcolor{black}{the} detection performance.
In particular, we observe that at each iteration of Algorithm~2,
the CPU updates $\mathbf{b}$ by solving the subproblem \eqref{ADMM1},
where the variable $\mathbf{b}$ is optimized to minimize the Euclidean distance
to the local detection result $\mathbf{x}_m^{(i-1)}$ at each AP.
However, from the original problem \eqref{opt1}, we can see that
$\mathbf{b}$ is actually optimized through the covariance matrix $\mathbf{C}_m$ by
minimizing a distance \textcolor{black}{defined using}
\textcolor{black}{the} likelihood function.
Based on this observation, we \textcolor{black}{replace the terms with respect to \textcolor{black}{the}
Euclidean distance (i.e., third and fourth terms)
in subproblem \eqref{ADMM1} with
the form of \eqref{opt1} and drop the term with respect to \textcolor{black}{the} dual variables
to further reduce the communication overheads}.
Consequently, subproblem \eqref{ADMM1} is
modified as
\begin{eqnarray}\label{modify}
\min _{\mathbf{b}\in\left[0,1\right]^{K(T+1)}}&&
\sum_{m=1}^M\left(\log\vert\mathbf{C}_m\vert+\text{Tr}\left(\mathbf{C}_m^{-1}\tilde{\mathbf{C}}_m^{(i-1)}\right)\right)
\nonumber\\&&+\rho\sum_{k=1}^K \left(\sum_{t=0}^Tb_{k,t}-\max_{t\in\{0,\ldots,T\}}b_{k,t}\right),
\end{eqnarray}
where $\tilde{\mathbf{C}}_m^{(i-1)}$ is in the form of $\mathbf{C}_m$ but with
$\mathbf{x}_m^{(i-1)}$ replacing $\mathbf{b}$ in \eqref{covariance}.
Since problem \eqref{modify} is \textcolor{black}{in the} same form of problem \textcolor{black}{\eqref{opt2}},
we can adopt a similar algorithm to Algorithm~1 to solve it
\textcolor{black}{(simply replace $\mathbf{Y}_m\mathbf{Y}_m^\mathrm{H}/N$ with $\tilde{\mathbf{C}}_m^{(i-1)}$ in Algorithm 1)}.
After replacing line~3 by the above updating procedure,
Algorithm~2 requires much fewer iterations, which will be verified via simulations.

We can interpret the benefit of the above modification from \eqref{ADMM1} to \eqref{modify} as follows. In problem \eqref{ADMM1}, $\mathbf{b}$ is estimated by
minimizing the Euclidean distance to each $\mathbf{x}_m^{(i-1)}$.
Nevertheless, \textcolor{black}{due to the diverse distances of each device from different APs,
the detection accuracy of each device} from different APs
can be substantially different. In particular, the detection results of a device
obtained from its nearby APs are more reliable than those from \textcolor{black}{distant} APs.
However, the information on \textcolor{black}{this} detection reliability is not
\textcolor{black}{captured and}
modeled in problem \eqref{ADMM1}.
In contrast, problem \eqref{modify} adopts $\tilde{\mathbf{C}}_m^{(i-1)}$
based on $\mathbf{x}_m^{(i-1)}$ as well as the corresponding large-scale fading components, which
\textcolor{black}{successfully capture and use this} detection reliability \textcolor{black}{information}.
In this sense, problem \eqref{modify} can provide a better
formulation for the approximation of $\mathbf{b}$,
making \textcolor{black}{the convergence of the iterative algorithm}
much faster.

\begin{algorithm}[t!]
\caption{Communication-Efficient Enhancement for Algorithm~2}
\begin{algorithmic}[1]\footnotesize
\State \textbf{Initialize} $\mathbf{b}^{(0)}$, $\left\{\mathbf{x}_m^{(0)}\right\}_{m=1}^M$, and $\left\{\boldsymbol{\lambda}_m^{(0)}\right\}_{m=1}^M$;\\
$\textbf{repeat}$ ($i=1,2,\ldots$)\\
~~~~Each AP $m$ sends $\mathcal{Q}\left(\mathbf{x}_m^{(i-1)}\right)$ to the CPU, $\forall m=1,\ldots,M$;\\
~~~~The CPU updates
$\mathbf{b}^{(i)}$ by solving \eqref{modify} with a similar algorithm to Algorithm~1;\\
~~~~The CPU broadcasts $\mathcal{Q}\left(\mathbf{b}^{(i)}\right)$ to each AP \textcolor{black}{$m$, $\forall m=1,\ldots,M$};\\
~~~~\textcolor{black}{Each AP $m$} updates $\mathbf{x}_m^{(i)}$ \textcolor{black}{by solving problem \eqref{ADMM2} with}
the coordinate descent algorithm, $\forall m=1,\ldots,M$;\\
~~~~\textcolor{black}{Each AP $m$} updates $\boldsymbol{\lambda}_m^{(i)}$ with \eqref{dual}, $\forall m=1,\ldots,M$;\\
$\textbf{until}$ convergence
\end{algorithmic}
\end{algorithm}

Using the modification in \eqref{modify},
we summarize the resulting algorithm as Algorithm~3.
In line~3 and line~5, $\mathcal{Q}(\cdot)$ is a function to compress
$\mathbf{x}_m$ and $\mathbf{b}$. For instance, \textcolor{black}{a simple $\mathcal{Q}(\cdot)$ that
we can adopt is the
uniform scalar quantizer for each element of the vectors}.
Due to the small dynamic range $[0,1]$ of $\mathbf{x}_m$ and $\mathbf{b}$,
the required number of quantization bits for compression can be
much smaller than that in the centralized detection.
\textcolor{black}{Furthermore}, since most of the entries in $\mathbf{x}_m$ and $\mathbf{b}$ are zeros due to the sparse activities, we can also adopt \textcolor{black}{the} variable-length compression scheme
such as Huffman coding to further reduce the required number of bits, making
Algorithm~3 more communication-efficient.
The enhancement on \textcolor{black}{the} communication efficiency of Algorithm~3
will be shown through simulations in Section~VI.
\textcolor{black}{On the other hand, similar to Algorithm~2, the computational complexity of Algorithm~3 is
also dominated by line~6 with $\mathcal{O}\left(K(T+1)(L+T)^2\right)$,
and its time complexity is $\mathcal{O}\left(IK(T+1)(L+T)^2\right)$,
where $I$ denotes the iteration number.}

\section{Simulation Results}
In this section, we present the performance of the proposed centralized and distributed
algorithms via simulations
in terms of the probability of missed detection (PM), i.e.,
the probability that an active device is detected as
inactive or its delay is incorrectly detected, and
the probability of false alarm (PF), i.e., the probability that an inactive device is detected as
active \cite{Wang22}.
Specifically, the indicator of the device activity and delay is recovered by
$\hat{b}_{k,t}=\mathbb{I}(b_{k,t}>\gamma)$, where $b_{k,t}$
is the optimization result \textcolor{black}{returned} by the proposed algorithms and $\gamma$ is a threshold that \textcolor{black}{varies} in $[0, 1]$ to realize a trade-off between PM and PF.

\subsection{Simulation Setting}
We consider a $1 \times 1$ square kilometers area with wrap-around at the boundary.
There are $M$ APs and $K=100$ IoT devices uniformly distributed in this square area, where
the ratio of the active devices to the total devices is $0.1$.
The signature sequence of each device is an independently
generated complex Gaussian distributed vector with i.i.d. elements and each element is with zero
mean and unit variance. The large-scale fading component follows the
micro-cell propagation model \cite{3GPP17}, i.e.,
$g_{k,m}=-30.5-36.7 \log_{10}D_{k,m}+\Psi_{k,m}$ in dB,
where $D_{k,m}$ is the distance in meters between the $k$-th device and the $m$-th AP.
\textcolor{black}{To reflect the effect of blockage,
the large-scale fading component
includes a shadow fading component
$\Psi_{k,m}$, which is complex Gaussian distributed with mean $0$ and variance $4$~\cite{3GPP17}.}
The maximum transmit power of each device is $23$ dBm and
the background Gaussian noise power is $-104$ dBm.
In order to reduce the channel gain differences among different devices,
the transmit power of each device is controlled based on the large-scale fading components
such that the SNR at its \textcolor{black}{dominant AP (which is the AP with the largest channel gain)} is fixed to a target value that
can be achieved by $95\%$ of the active devices \cite{Ganesan2021}.
All the simulation results are obtained by averaging over $1000$ trials,
with independent APs' and devices' locations, channels,
signature sequences, device activity patterns, delays,
and noise realizations in each trial.

In Algorithm~1, we set the penalty parameter $\rho$ as $0.16$,
and choose an adaptive step size $\eta_i$
as the inverse of the local estimation of the Lipschitz constant of the gradient \cite{Shao2020-2}.
Moreover, $\mathbf{b}^{(0)}$ is initialized as a zero vector.
In Algorithm~2 and Algorithm~3, we set $\mu$ as $0.08$,
and initialize $\mathbf{b}^{(0)}$ and $\boldsymbol{\lambda}_m^{(0)}$ as zero vectors.
To achieve fast convergence, $\mathbf{x}_m^{(0)}$ is initialized
by a local detection at each AP:
$\min\limits_{\mathbf{x}_m\in[0,1]^{K(T+1)}} f_m\left(\mathbf{x}_{m}\right)$,
which can be solved with the coordinate descent algorithm for single-cell device activity detection
\cite{Haghighatshoar2018,chenIcc,icasspW}.

\subsection{Proposed Centralized Algorithm \textcolor{black}{Versus State-of-the-Art Approaches}}
First, we demonstrate the performance of the proposed centralized Algorithm~1.
For comparison, we also show the detection performance
of two benchmarks, i.e., CD-E and BCD in \textcolor{black}{\cite{chenIcc,Wang22}}. While these two approaches
are designed for single-cell asynchronous activity detection, we extend them to
solve problem \eqref{opt1} for cell-free massive MIMO as follows.
\begin{itemize}
     \item CD-E first drops the $\ell_0$-norm constraints in \eqref{opt1c2} and then solves the relaxed problem with the \textcolor{black}{coordinate} descent algorithm. After the optimization process, \eqref{opt1c2} is re-enforced by an additional constraint enforcement step.

     \item BCD enforces \eqref{opt1c2} within the optimization process. Specifically, the variable $\mathbf{b}$ is decomposed into $K$ blocks, where each block $\mathbf{b}_k$ is sequentially updated with other blocks fixed. Each subproblem is
         solved by comparing the solutions of $T+1$ one-dimensional subproblems
         within the feasible set of \eqref{opt1c2}
         and then selecting the one with the minimum cost function value.

\end{itemize}

We compare the performance of different approaches in terms of PM and PF in Fig. \ref{F1},
where the numbers of APs and antennas at each AP, the length of the signature sequences,
and the maximum delay are set as $M=N=8$,
$L=9$, and $T=1$, respectively. We can see that
both the proposed Algorithm~1 and BCD outperform CD-E,
since they both solve the original problem \eqref{opt1} to stationary points.
\textcolor{black}{However}, Algorithm~1 achieves a much better PM-PF trade-off than that of
BCD. In particular, the PM of Algorithm~1 is over $10$ times lower than that of BCD
under the same PF.
This is because in Algorithm~1,
the highly nonconvex $\ell_0$-norm constraints in \eqref{opt1c2}
are gradually satisfied in a gentle fashion as the iteration number increases,
which \textcolor{black}{is helpful in getting}
around bad stationary points of problem \eqref{opt1}.

\begin{figure}[t!]
\begin{center}
  \includegraphics[width=0.45\textwidth]{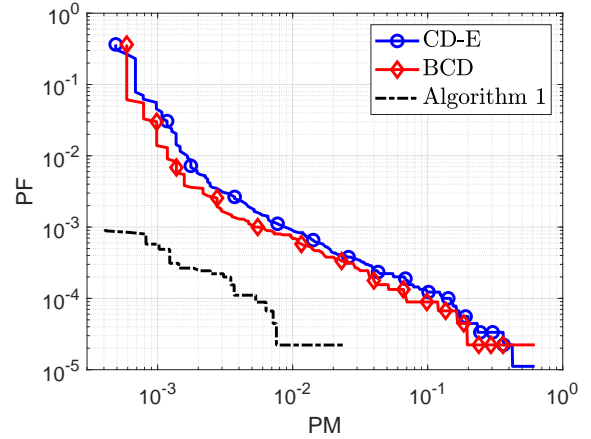}
  \caption{PM-PF trade-offs achieved by different centralized algorithms.}\label{F1}
\end{center}
\vspace{-0.2cm}
\end{figure}

We further compare these centralized algorithms under
different maximum delays and different numbers of APs.
Due to the trade-off between PM and PF, we show the probability of error when PF $=$ PM by appropriately setting the threshold $\gamma$. The probability of error versus the maximum delay $T$ is shown in Fig.~\ref{F2a},
where the numbers of APs and antennas at each AP are set as $M=N=8$ and
the length of the effective signature sequences is fixed as $L+T=10$.
In particular, $(L=10,~T=0)$ represents ideal synchronous transmissions with no delay.
It can be seen that while all the approaches result in higher PM as $T$ increases \textcolor{black}{up to $8$, Algorithm~1 always performs the best, which demonstrates the superiority of Algorithm~1
under asynchronous transmissions.}

On the other hand, the probability of error versus the number of APs $M$ is shown in Fig.~\ref{F2b},
where the length of the signature sequences and the maximum delay are set as
$L=9$ and $T=1$, respectively. The total number of antennas at all the APs
is fixed as $MN=64$.
It can be seen that as $M$ increases from $1$ to $16$, the detection performance of all the approaches
becomes better. This performance improvement \textcolor{black}{comes from the fact that}
\textcolor{black}{when there are more APs in the area, the distance
between each device and each AP becomes shorter, resulting in
a higher SNR at each AP.}
On the other hand, when $M$ further increases, the antenna number at each AP $N$ becomes much smaller.
The significant decrease in the spatial resolution makes the detection performance worse.
Nevertheless, we can see that Algorithm~1 always performs better than \textcolor{black}{the other two} approaches in the whole \textcolor{black}{range of} $M$.

\textcolor{black}{To demonstrate the superiority of massive MIMO, we further show
the performance comparison under different numbers of antennas in Fig.~\ref{F2c}.
We fix the number of APs as $M=8$ and vary
the number of antennas at each AP such that the total number of antennas increases from $32$ to $96$.
It can be seen that while the detection performance of all the approaches
becomes better as the number of antennas increases,
the proposed Algorithm~1 always achieves the best performance.}
Due to the superiority of Algorithm~1, we adopt it as a baseline
for the proposed distributed algorithms in the rest of simulations.

\begin{figure}[t!]
\begin{center}
 \subfigure[Probability of error versus $T$.]{
  \includegraphics[width=0.45\textwidth]{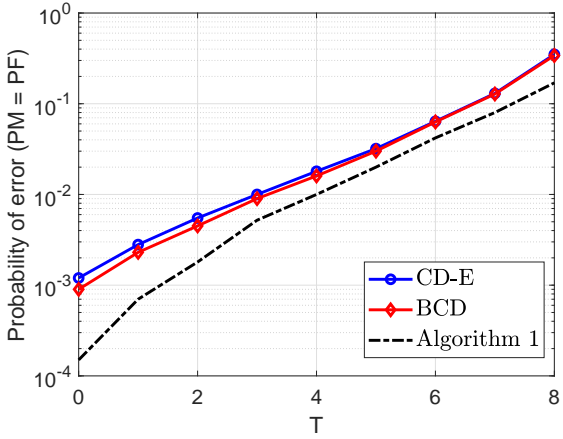}\label{F2a}
  }
  \subfigure[Probability of error versus $M$.]{
  \includegraphics[width=0.45\textwidth]{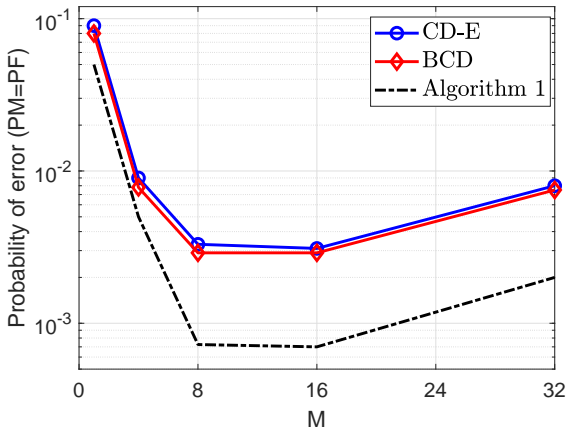}\label{F2b}
}
  \subfigure[Probability of error versus total number of antennas.]{
  \includegraphics[width=0.45\textwidth]{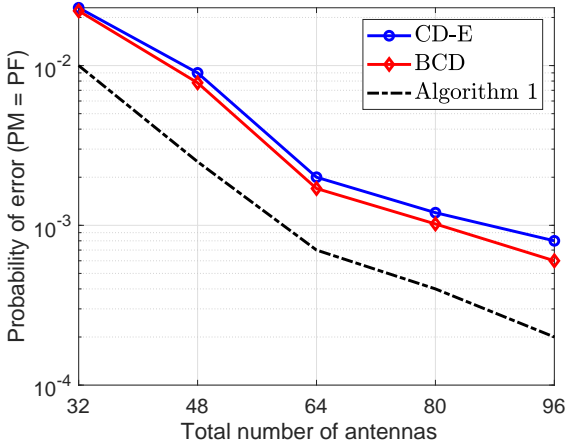}\label{F2c}
}
\caption{Performance comparison of centralized algorithms.}
\end{center}
\vspace{-0.2cm}
\end{figure}

\subsection{Proposed Distributed Algorithms \textcolor{black}{Versus Centralized Algorithm}}
\begin{figure}[t!]
\begin{center}
 \subfigure[Probability of error versus \textcolor{black}{iteration number} $I$.]{
  \includegraphics[width=0.45\textwidth]{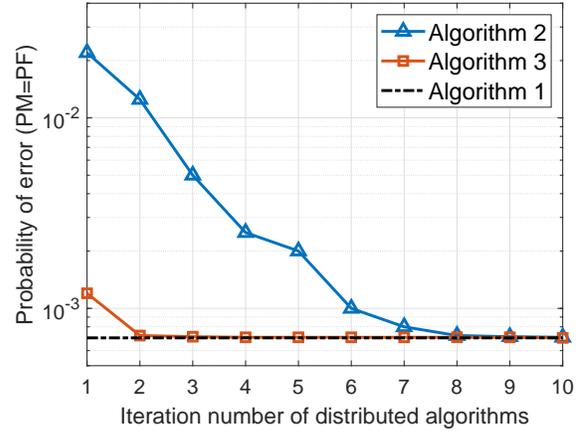}\label{F3a}
  }
  \subfigure[Probability of error versus $T$.]{
  \includegraphics[width=0.45\textwidth]{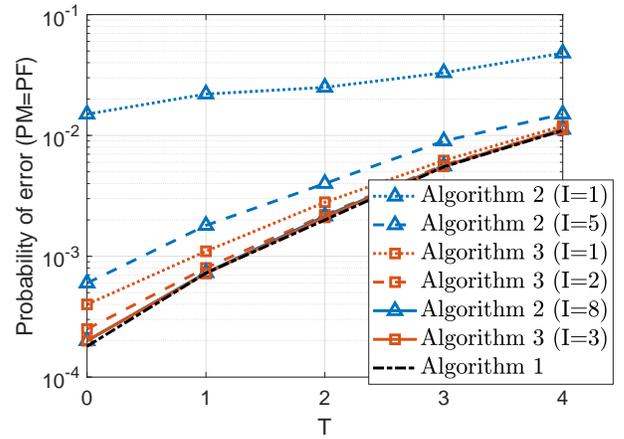}\label{F3b}
}\caption{Performance comparison of distributed algorithms.}
\end{center}
\vspace{-0.2cm}
\end{figure}

Next, we show the detection performance of the proposed distributed Algorithm~2 and Algorithm~3,
\textcolor{black}{and compare them to that of the centralized algorithm}.
The probability of error versus the iteration number of the distributed algorithms
is shown in Fig. \ref{F3a}, where $M=N=8$,
$L=9$, and $T=1$. For fair comparison, all these algorithms are executed under ideal fronthauls.
It can be seen that both Algorithm~2 and Algorithm~3 achieve fast convergence (within $8$ iterations)
to the result of Algorithm~1.
Furthermore, Fig. \ref{F3b} shows that \textcolor{black}{pretty} fast convergence can be achieved under different $T$. This means that Algorithm~2 or
Algorithm~3 can replace Algorithm~1 without changing the performance.
\textcolor{black}{On the other hand,}
since Algorithm~3 is judiciously designed based on
Algorithm~2, we can see that Algorithm~3 achieves an impressively \textcolor{black}{fast} convergence within only $2$ \textcolor{black}{to $3$} iterations.
\textcolor{black}{As} Algorithm~3 achieves the same performance to that of Algorithm~2 \textcolor{black}{but}
with much faster convergence, we only show the performance of Algorithm~3 under
capacity-limited fronthauls in the \textcolor{black}{following} simulations.

\subsection{\textcolor{black}{Performance Under Capacity-Limited Fronthauls}}

Next, we show how many \textcolor{black}{quantization} bits
are needed
to approach the results under ideal fronthauls.
For \textcolor{black}{the} illustration purpose, we adopt a uniform scalar quantizer
in Algorithm~1 and Algorithm~3, respectively.
The probability of error versus $T$
is shown in Fig.~\ref{F4a}, where $M=N=8$ and $L+T=10$.
It can be seen that under different $T$, Algorithm~3 always performs very close to Algorithm~1.
However, Algorithm~1 requires at least $16$ quantization bits per real-valued scalar
to approach the performance under ideal fronthauls.
\textcolor{black}{When $14$ quantization bits are used,
Algorithm~1 performs even worse than
Algorithm~3 with only $4$
quantization bits and $I=1$ iteration.}
In Fig.~\ref{F4b},
we show the probability of error versus $M$, where $L=9$, $T=1$, and $MN=64$.
Similar to Fig.~\ref{F4a},
we can see that under different $M$, Algorithm~3 always requires much fewer
quantization bits per real-valued scalar to approach the detection performance under ideal fronthauls.

\textcolor{black}{To clearly see the overall communication overheads, we analyze the number of bits required to be transmitted in \textcolor{black}{Algorithm}~1 and Algorithm~3 as follows.}
In Algorithm~1,
the received signal $\mathbf{Y}_m$ is used via $\mathbf{Y}_m\mathbf{Y}_m^\mathrm{H}/N$, which is a Hermitian matrix with $(L+T)^2$ real-valued scalars.
Thus, when $L+T\leq2N$, each AP $m$ can \textcolor{black}{quantize} and send $\mathbf{Y}_m\mathbf{Y}_m^\mathrm{H}/N$
instead of $\mathbf{Y}_m$ (with $2(L+T)N$ real-valued scalars) to the CPU.
With $Q_1$ denoting the number of quantization bits per real-valued scalar,
the overall number of bits required by Algorithm~1 is $MQ_1(L+T)^2$
when $L+T\leq2N$ or $2MQ_1(L+T)N$ otherwise.
On the other hand, in Algorithm~3, each AP $m$ sends
its \textcolor{black}{quantized} local detection result to the CPU at \textcolor{black}{each} iteration.
With 
$Q_2$ denoting the number of quantization bits per real-valued scalar,
the number of bits sent from each AP $m$ to the CPU is
\textcolor{black}{$Q_2K(T+1)$}. Similarly, after the CPU updates $\mathbf{b}^{(i)}$,
the number of bits sent from the CPU to each AP $m$ is also
\textcolor{black}{$Q_{2}K(T+1)$}. Since Algorithm~3 can stop before sending $\mathbf{b}^{(I)}$
to the APs at the last iteration, the overall number of bits required by Algorithm~3 is
\textcolor{black}{$(2I-1)MKQ_{2}(T+1)$}. For example,
when \textcolor{black}{$M=N=8$}, $L=9$, $T=1$, $I=1$, \textcolor{black}{$Q_1=14$}, and $Q_2=4$,
Algorithm~1 and Algorithm~3 require \textcolor{black}{$11200$} and $6400$ quantization bits, respectively. \textcolor{black}{This shows that even under the
uniform scalar quantization, Algorithm~3 reduces the number of bits transmitted significantly.}

\begin{figure}[t!]
\begin{center}
 \subfigure[Probability of error versus $T$.]{
  \includegraphics[width=0.45\textwidth]{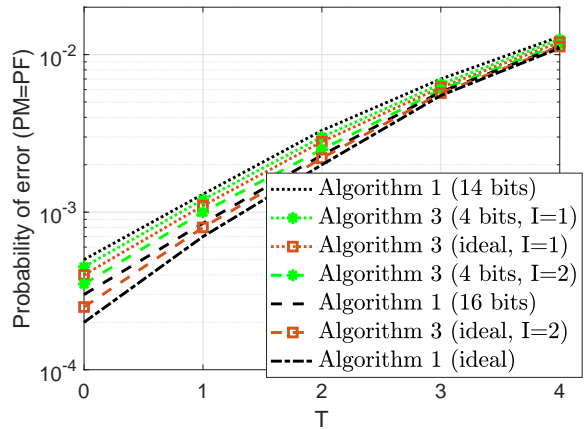}\label{F4a}
  }
  \subfigure[Probability of error versus $M$.]{
  \includegraphics[width=0.45\textwidth]{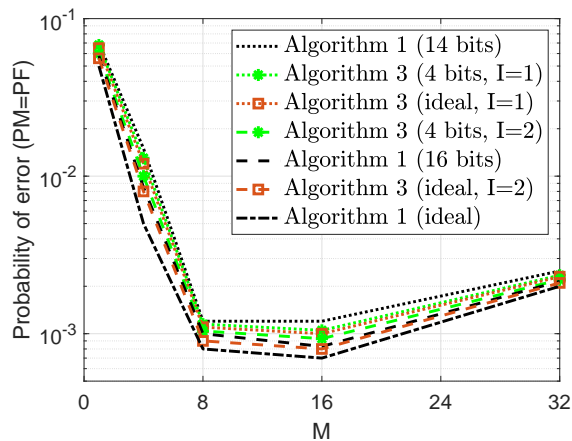}\label{F4b}
}\caption{Performance comparison between Algorithm~1 and Algorithm~3
\textcolor{black}{using the simple uniform scalar quantization}.}\label{F4}
\end{center}
\vspace{-0.2cm}
\end{figure}

\textcolor{black}{Since most of the local detection results are zero, the actual number of bits to be transmitted
can be further reduced by using various data compression schemes. As a demonstration,}
we compare the total number of bits required by Algorithm~1 and Algorithm~3
using Huffman coding.
In particular, Huffman coding is applied to the quantized contents for both Algorithm~1 and
Algorithm~3.
Figure~\ref{F5} shows the simulation results for
$M=N=8$,
$L=9$, and $T=1$.
\textcolor{black}{For Algorithm~1, we consider different quantization levels $2^{11}$, $2^{14}$, and $2^{16}$,
whereas for Algorithm 3, the quantization level is fixed as $2^4$ and each AP transmits
the local detection results of $50$, $70$, $80$, and $100$ devices with the largest large-scale fading coefficients.}
We can see that
while Huffman coding is effective in reducing the overall number of bits for both Algorithm~1 and Algorithm~3, the compression ratio is higher in Algorithm~3 due to the sparse
local detection results. For example, before using Huffman coding,
the number of bits required by Algorithm~3 is \textcolor{black}{almost} $2$ times smaller than that of
Algorithm~1, whereas after using Huffman coding,
the number of bits required by Algorithm~3 is at least $3$ times smaller than that of
Algorithm~1.

\begin{figure}[t!]
\begin{center}
  \includegraphics[width=0.45\textwidth]{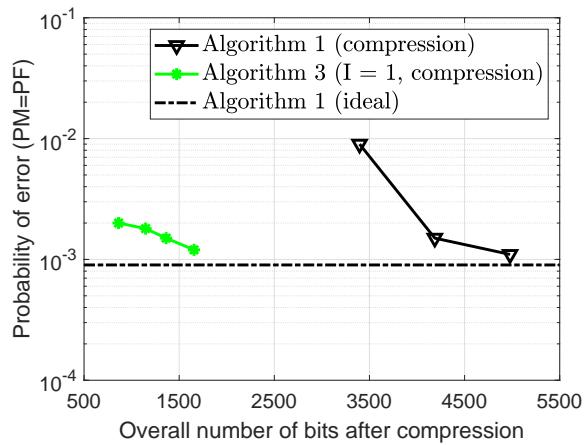}
  \caption{Probability of error versus overall number of bits \textcolor{black}{after Huffman coding} compression.}\label{F5}
\end{center}
\vspace{-0.2cm}
\end{figure}

\section{Conclusions}
This paper studied asynchronous activity detection methods for cell-free massive MIMO.
To tackle the discontinuous and nonconvex $\ell_0$-norm constraints due to the asynchronous transmissions,
an equivalent reformulation of the original problem \textcolor{black}{was established.
A centralized algorithm and a distributed algorithm were proposed,
with both being theoretically guaranteed to converge to a
stationary point of the original asynchronous activity detection problem}.
Since the proposed algorithms address the $\ell_0$-norm constraints in a gentle fashion
as the iteration number increases, the sequence \textcolor{black}{generated by the proposed algorithms} can get around bad stationary points
caused by the highly nonconvex $\ell_0$-norm.
To reduce the capacity requirements of the fronthauls in cell-free massive MIMO,
\textcolor{black}{a communication-efficient accelerated distributed algorithm
was further designed}.
Simulation results demonstrated that the
proposed centralized \textcolor{black}{and distributed algorithms outperform}
state-of-the-art approaches,
whereas the proposed \textcolor{black}{accelerated} distributed algorithm achieves a close detection performance to that of the centralized algorithm \textcolor{black}{but with a much smaller number of bits to be transmitted on the fronthaul links}.

\numberwithin{equation}{section}
\appendices

\section{Proof of Theorem \ref{equivalence}}\label{proof of equivalence}
We first prove that there exists a finite $\rho^*<\infty$ such that for any $\rho>\rho^*$,
the \textcolor{black}{stationary point} of problem \eqref{opt2}
is \textcolor{black}{also} a
feasible point of problem \eqref{opt1}.
\textcolor{black}{Let $\mathbf{b}^*$ denote a stationary point of problem \eqref{opt2},
which} satisfies the following first-order optimality condition\textcolor{black}{\cite{facchinei2007finite}}:
\begin{eqnarray}\label{first-order}
&&-\rho\sum_{k=1}^K
\lim _{\tau \rightarrow 0^{+}} \frac{\max\limits_{t\in\{0,\ldots,T\}}\left(b_{k,t}^*+\tau \textcolor{black}{\tilde{d}}_{k,t}\right)
- \max\limits_{t\in\{0,\ldots,T\}}b_{k,t}^*}{\tau}
\nonumber\\&&+\sum_{m=1}^M \nabla f_m\left(\mathbf{b}^*\right)^\mathrm{T}\textcolor{black}{\tilde{\mathbf{d}}}
+\rho\sum_{k=1}^K \sum_{t=0}^T\textcolor{black}{\tilde{d}}_{k,t}
\geq 0,\nonumber\\
&&\forall \textcolor{black}{\tilde{\mathbf{d}}}\in\mathcal{T}\left(\mathbf{b}^*\right),
\end{eqnarray}
where $f_m\left(\mathbf{b}\right)\triangleq\log\vert{\mathbf{C}}_m\vert+
\text{Tr}\left({\mathbf{C}}_m^{-1}\mathbf{Y}_m\mathbf{Y}_m^\mathrm{H}\right)\textcolor{black}{/N}$ and
$\mathcal{T}\left(\mathbf{b}^*\right)$ is
the tangent cone of the feasible set of problem \eqref{opt2} at $\mathbf{b}^*$.
\textcolor{black}{Denoting $\mathcal{W}(k)
\triangleq\arg\max\limits_{t\in\{0,\ldots,T\}} b_{k,t}^*$,
we have $\max\limits_{t\in\{0,\ldots,T\}}\left(b_{k,t}^*+\tau \textcolor{black}{\tilde{d}}_{k,t}\right)=\max\limits_{t\in\{0,\ldots,T\}} b_{k,t}^*+\tau \max\limits_{t\in\mathcal{W}(k)}\tilde{d}_{k,t}, \forall \tau \rightarrow 0^{+}$. Using this equation
and taking any $\omega(k)\in\arg\max\limits_{t\in\mathcal{W}(k)}\tilde{d}_{k,t}$}
we can simplify \eqref{first-order} as
\begin{eqnarray}\label{first-order2}
\sum_{m=1}^M \nabla f_m\left(\mathbf{b}^*\right)^\mathrm{T}\textcolor{black}{\tilde{\mathbf{d}}}
+\rho\sum_{k=1}^K \left(\sum_{t=0}^T\textcolor{black}{\tilde{d}}_{k,t}- \textcolor{black}{\tilde{d}}_{k,\omega(k)}\right) \geq 0,\nonumber\\
\forall \textcolor{black}{\tilde{\mathbf{d}}}\in\mathcal{T}\left(\mathbf{b}^*\right).
\end{eqnarray}
\textcolor{black}{Notice that} the first term of \eqref{first-order2} can be upper bounded as
\begin{equation}\label{cs}
\sum_{m=1}^M \nabla f_m\left(\mathbf{b}^*\right)^\mathrm{T}\textcolor{black}{\tilde{\mathbf{d}}}
\leq
\textcolor{black}{\left\Vert  \sum_{m=1}^M \nabla f_m\left(\mathbf{b}^*\right)\right\Vert_{\infty}
\left\Vert \textcolor{black}{\tilde{\mathbf{d}}} \right\Vert_{1}}
\triangleq \rho^{*}\left\Vert \textcolor{black}{\tilde{\mathbf{d}}} \right\Vert_{1},
\end{equation}
where $\rho^*<\infty$ since $\mathbf{b}^*$ is bounded in $[0,1]^{K(T+1)}$.
Substituting \eqref{cs} into \eqref{first-order2}, we obtain
\begin{equation}\label{cs2}
\rho^{*}\left\Vert \textcolor{black}{\tilde{\mathbf{d}}} \right\Vert_{1}
+\rho\sum_{k=1}^K \left(\sum_{t=0}^T\textcolor{black}{\tilde{d}}_{k,t}- \textcolor{black}{\tilde{d}}_{k,\omega(k)}\right) \geq 0,~~
\forall \textcolor{black}{\tilde{\mathbf{d}}}\in\mathcal{T}\left(\mathbf{b^*}\right).
\end{equation}

Based on \eqref{cs2}, we prove that when $\rho>\rho^*$,
$\mathbf{b}^*$ must satisfy \eqref{opt1c2} by contradiction.
Suppose that \eqref{opt1c2} does not hold, i.e., there exits $\bar{k}\in\left\{1,\ldots,K\right\}$
such that
$\left\Vert\mathbf{b}_{\bar{k}}^*\right\Vert_0>1$.
Moreover, since $\mathbf{b}^*\in[0,1]^{K(T+1)}$ in problem \eqref{opt2},
there must exist one $\phi(\bar{k})\neq \omega(\bar{k})$ satisfying $b_{\bar{k},\phi(\bar{k})}^*>0$.
On the other hand, notice that
any $\textcolor{black}{\tilde{\mathbf{d}}}\in\mathcal{T}\left(\mathbf{b^*}\right)$ should
satisfy
\begin{equation}\label{tagent}
\begin{cases}
\textcolor{black}{\tilde{d}}_{k,t} \geq 0, &\text{ if } b_{k,t}^*=0,\cr
\textcolor{black}{\tilde{d}}_{k,t} \in \mathbb{R}, &\text{ if } b_{k,t}^*\in(0,1),\cr
\textcolor{black}{\tilde{d}}_{k,t} \leq 0, &\text { if } b_{k,t}^*=1,
\end{cases}~~\forall k= 1,\ldots,K,~~\forall t= 0,\ldots,T.
\end{equation}
Consequently,
there exists one $\textcolor{black}{\tilde{\mathbf{d}}}\in\mathcal{T}\left(\mathbf{b^*}\right)$ satisfying
$\textcolor{black}{\tilde{d}}_{\bar{k},\phi(\bar{k})}=-1$ and $\textcolor{black}{\tilde{d}}_{k,t}=0, \forall (k,t)\neq (\bar{k}, \phi(\bar{k}))$.
Substituting this $\textcolor{black}{\tilde{\mathbf{d}}}$ into \eqref{cs2}, we obtain
$\rho^{*}-\rho\geq 0$, which is contradictory to $\rho>\rho^*$.
Therefore, when $\rho>\rho^*$, $\mathbf{b}^*$ must satisfy \eqref{opt1c2}.
Together with the fact that \eqref{opt1c1} is already satisfied in problem~\eqref{opt2},
we can conclude that
$\mathbf{b}^*$ is a feasible point of problem~\eqref{opt1}.

Next, we prove that $\mathbf{b}^*$ must also be a stationary point problem \eqref{opt1} when $\rho>\rho^*$.
Since $\mathbf{b}^*$ has been proved to be feasible to
problem \eqref{opt1}, we have
$\left\Vert\mathbf{b}_k^*\right\Vert_0\leq1$.
Moreover, with $\mathcal{P}\left(\mathbf{b^*}\right)$
denoting the tangent cone of the feasible set of problem \eqref{opt1} at $\mathbf{b^*}$,
for any $\textcolor{black}{\tilde{\mathbf{d}}}\in\mathcal{P}\left(\mathbf{b^*}\right)$,
we have
$\left\Vert\mathbf{b}_k^*+\tau\textcolor{black}{\tilde{\mathbf{d}}}_k\right\Vert_0\leq1, \forall \tau\rightarrow0^+, \forall k=1,\ldots,K$.
Consequently, \textcolor{black}{we have
$\sum\limits_{t=0}^T\left(b_{k,t}^*+\tau
\tilde{d}_{k,t}\right)=\max\limits_{t\in\{0,\ldots,T\}}\left(b_{k,t}^*+\tau
\tilde{d}_{k,t}\right), \forall \tau\rightarrow0^+$. If $\left\Vert\mathbf{b}_k^*\right\Vert_0=0$, it reduces to $\sum\limits_{t=0}^T\textcolor{black}{\tilde{d}}_{k,t}=\tilde{d}_{k,\omega(k)}$.
If $\left\Vert\mathbf{b}_k^*\right\Vert_0=1$,
from $\left\Vert\mathbf{b}_k^*+\tau\textcolor{black}{\tilde{\mathbf{d}}}_k\right\Vert_0\leq1$,
we have $\tilde{d}_{k,t}=0, \forall t\neq\omega(k)$.
Thus, it also follows that
$\sum\limits_{t=0}^T\textcolor{black}{\tilde{d}}_{k,t}=\tilde{d}_{k,\omega(k)}$. On the other hand,
since the feasible set of \eqref{opt1} is a subset of the feasible set of \eqref{opt2},
we have $\mathcal{P}\left(\mathbf{b^*}\right)\subseteq\mathcal{T}\left(\mathbf{b^*}\right)$.
Substituting
$\sum\limits_{t=0}^T\textcolor{black}{\tilde{d}}_{k,t}=\tilde{d}_{k,\omega(k)}$}
into \eqref{first-order2}
and focusing on $\textcolor{black}{\tilde{\mathbf{d}}}\in\mathcal{P}\left(\mathbf{b^*}\right)$ give
\begin{equation}\label{first-order4}
\sum_{m=1}^M \nabla f_m\left(\mathbf{b}^*\right)^\mathrm{T}\textcolor{black}{\tilde{\mathbf{d}}}
 \geq 0,~~
\forall \textcolor{black}{\tilde{\mathbf{d}}}\in\mathcal{P}\left(\mathbf{b^*}\right),
\end{equation}
which means that $\mathbf{b^*}$ is also a stationary point of problem \eqref{opt1}.

\textcolor{black}{Finally, we prove that the global optimal solutions
of the two problems are identical. Let $\mathbf{b}^*$
denote the global optimal solution of problem \eqref{opt2}.
Let $F(\cdot)$ and $\mathbf{b}^{\star}$
denote the cost function and the optimal solution of problem \eqref{opt1}, respectively.}
Since both $\mathbf{b}^*$ and $\mathbf{b}^{\star}$ are feasible to
problem~\eqref{opt1}, we have
$\left\Vert\mathbf{b}_k^*\right\Vert_0\leq1$ and
$\left\Vert\mathbf{b}_k^\star\right\Vert_0\leq1$.
Consequently, we have $\sum\limits_{t=0}^Tb_{k,t}^*=\max\limits_{t\in\{0,\ldots,T\}}b_{k,t}^*$ and
$\sum\limits_{t=0}^Tb_{k,t}^\star=\max\limits_{t\in\{0,\ldots,T\}}b_{k,t}^\star$.
With $G(\cdot)$ denoting the cost function of
problem \eqref{opt2}, by substituting the above two equalities
into $G\left(\mathbf{b}^*\right)$ and $G\left(\mathbf{b}^\star\right)$,
we have $F\left(\mathbf{b}^*\right)=G\left(\mathbf{b}^*\right)$ and $G\left(\mathbf{b}^{\star}\right)=F\left(\mathbf{b}^{\star}\right)$.
Moreover, since $\mathbf{b}^{*}$ is the optimal solution of problem \eqref{opt2},
we have $F\left(\mathbf{b}^*\right)=G\left(\mathbf{b}^*\right)\leq G\left(\mathbf{b}^{\star}\right)=F\left(\mathbf{b}^{\star}\right)$.
On the other hand, since $\mathbf{b}^*$ is a feasible point of problem \eqref{opt1},
we also have $F\left(\mathbf{b}^*\right)\geq F\left(\mathbf{b}^{\star}\right)$.
Combining the above two inequalities, we finally have
$F\left(\mathbf{b}^*\right)=G\left(\mathbf{b}^*\right)=G\left(\mathbf{b}^{\star}\right)=F\left(\mathbf{b}^{\star}\right)$,
which means that optimal solutions of problems \eqref{opt1} and \eqref{opt2} are identical.

\section{Proof of Proposition \ref{update2}}\label{proof of update2}
Problem \eqref{PG} can be decomposed into $K$ subproblems, with each written as
\begin{equation}\label{PG2}
\mathbf{b}_k^{(i)}
= \arg \min _{\mathbf{b}_k \in [0,1]^{T+1}}
\sum_{t=0}^T
\left(b_{k,t}-\alpha_{k,t}^{(i)}\right)^{2}
-2\eta_{i}\rho\max_{t\in\{0,\ldots,T\}}b_{k,t},
\end{equation}
where $\alpha_{k,t}^{(i)}\triangleq b_{k,t}^{(i-1)}-\eta_{i} d_{k,t}^{(i-1)}$.
Taking any $\omega(k)\in\arg\max\limits_{t\in\{0,\ldots,T\}}b_{k,t}^{(i)}$,
problem \eqref{PG2} can be equivalently written as
\begin{equation}\label{PG3}
b_{k,t}^{(i)}=
\begin{cases}
\arg \min\limits_{b_{k,t} \in [0,1]}
\left(b_{k,t}-\alpha_{k,t}^{(i)}\right)^{2}-2\eta_{i}\rho b_{k,t}, &\text {if } t=\omega(k), \cr
\arg \min\limits_{b_{k,t} \in [0,1]}
\left(b_{k,t}-\alpha_{k,t}^{(i)}\right)^{2},  &\text {otherwise},
\end{cases}
\end{equation}
which can be simplified as
\begin{equation}\label{closed2}
b_{k,t}^{(i)}=
\begin{cases}
\textcolor{black}{\Pi_{[0,1]}}\left(\alpha_{k,t}^{(i)}+\eta_{i}\rho\right), &\text {if } t=\omega(k), \cr
\textcolor{black}{\Pi_{[0,1]}}\left(\alpha_{k,t}^{(i)}\right),  &\text {otherwise}.
\end{cases}
\end{equation}

However, since $\omega(k)$ is defined based on the solution $\mathbf{b}_k^{(i)}$ itself,
\eqref{closed2} cannot be directly used as the closed-form solution.
Next, we prove
that $\omega(k)$ can be replaced by $\tau(k)\in\arg\max\limits_{t\in\{0,\ldots,T\}}\alpha_{k,t}^{(i)}$.
According to the definition of $\tau(k)$, we have
$\alpha_{k,\tau(k)}^{(i)}\geq\alpha_{k,t}^{(i)}, \forall t=0,\ldots,T$.
Since $\eta_{i}\rho>0$ and $\textcolor{black}{\Pi_{[0,1]}}\left(\cdot\right)$ is a monotonically non-decreasing function,
we have
\begin{equation}\label{ineq2}
\textcolor{black}{\Pi_{[0,1]}}\left(\alpha_{k,\tau(k)}^{(i)}+\eta_{i}\rho\right)\geq
\textcolor{black}{\Pi_{[0,1]}}\left(\alpha_{k,t}^{(i)}\right),~~\forall t=0,\ldots,T.
\end{equation}
On the other hand, from \eqref{closed2}, we also have
\begin{eqnarray}\label{ineq3}
b_{k,\omega(k)}^{(i)}=\textcolor{black}{\Pi_{[0,1]}}\left(\alpha_{k,\omega(k)}^{(i)}+\eta_{i}\rho\right)\geq
\textcolor{black}{\Pi_{[0,1]}}\left(\alpha_{k,t}^{(i)}\right)=b_{k,t}^{(i)},\nonumber\\
\forall t=0,\ldots,T.~~
\end{eqnarray}
Comparing \eqref{ineq2} and \eqref{ineq3}, we can conclude that $\omega(k)$ in \eqref{closed2}
can be replaced by $\tau(k)$, and
the optimal solution of \eqref{PG} is given by \eqref{closed}.

\section{Proof of Theorem \ref{stationary}}\label{proof of stationary}
Denote the cost function of problems \eqref{opt2} and \eqref{PG}
as $G\left(\mathbf{b}\right)=G_0\left(\mathbf{b}\right)-\rho\sum\limits_{k=1}^K\max\limits_{t\in\{0,\ldots,T\}}b_{k,t}$ and
$U\left(\mathbf{b}\right)$, respectively.
Taking the second-order Taylor expansion to $G_0\left(\mathbf{b}\right)$ at $\mathbf{b}^{(i-1)}$,
we have \eqref{taylor},
\begin{figure*}[!t]
\normalsize
\begin{eqnarray}\label{taylor}
G\left(\mathbf{b}\right)&\leq& \underbrace{G_0\left(\mathbf{b}^{(i-1)}\right)+\left(\mathbf{d}^{(i-1)}\right)^\mathrm{T}
\left(\mathbf{b}-\mathbf{b}^{(i-1)}\right)+\frac{L_{\text{d}}}{2}\left\Vert\mathbf{b}-\mathbf{b}^{(i-1)}\right\Vert_2^2
-\rho\sum_{k=1}^K\max_{t\in\{0,\ldots,T\}}b_{k,t}}_{U_1(\mathbf{b})},\nonumber\\
&\leq& \underbrace{G_0\left(\mathbf{b}^{(i-1)}\right)+\left(\mathbf{d}^{(i-1)}\right)^\mathrm{T}
\left(\mathbf{b}-\mathbf{b}^{(i-1)}\right)+\frac{1}{2\eta_i}\left\Vert\mathbf{b}-\mathbf{b}^{(i-1)}\right\Vert_2^2
-\rho\sum_{k=1}^K\max_{t\in\{0,\ldots,T\}}b_{k,t}}_{U_2\left(\mathbf{b}\right)},\nonumber\\
&=& U\left(\mathbf{b}\right)+C_0^{(i-1)},~~\forall \mathbf{b}\in[0,1]^{K(T+1)},
\end{eqnarray}
\hrulefill
\vspace*{4pt}
\end{figure*}
where $\mathbf{d}^{(i-1)}\triangleq \nabla G_0\left(\mathbf{b}^{(i-1)}\right)$,
$L_{\text{d}}$ is the Lipschitz constant of $\nabla G_0\left(\mathbf{b}\right)$,
and $C_0^{(i-1)}\triangleq G_0\left(\mathbf{b}^{(i-1)}\right)-\eta_i\left\Vert\mathbf{d}^{(i-1)}\right\Vert_2^2/2$.
The first inequality \textcolor{black}{in \eqref{taylor}} is due to the
\textcolor{black}{fact that $\nabla G_0(\mathbf{b})$ is Lipschitz continuous with constant $L_{\text{d}}$}
and the second inequality comes from $\eta_i<1/L_{\text{d}}$.
Consequently, we have
\begin{eqnarray}\label{df}
&&G\left(\mathbf{b}^{(i-1)}\right)-G\left(\mathbf{b}^{(i)}\right)\nonumber\\
&\geq& G\left(\mathbf{b}^{(i-1)}\right)-U_1\left(\mathbf{b}^{(i)}\right)=U_2\left(\mathbf{b}^{(i-1)}\right)-U_1\left(\mathbf{b}^{(i)}\right)\nonumber\\
&\geq& U_2\left(\mathbf{b}^{(i)}\right)-U_1\left(\mathbf{b}^{(i)}\right)\nonumber\\&=&
\left(\frac{1}{2\eta_i}-\frac{L_{\text{d}}}{2}\right)\left\Vert\mathbf{b}^{(i)}-\mathbf{b}^{(i-1)}\right\Vert_2^2\geq 0,
\end{eqnarray}
where the second inequality is because $\mathbf{b}^{(i)}$ is the optimal solution of problem \eqref{PG}
and hence also the minimizer of $U_2\left(\mathbf{b}\right)$.
Summing \eqref{df} over $i$ yields
\begin{eqnarray}\label{sumdf}
&&\sum_{i=1}^{\infty}\left(\frac{1}{2\eta_i}-\frac{L_{\text{d}}}{2}\right)\left\Vert\mathbf{b}^{(i)}-\mathbf{b}^{(i-1)}\right\Vert_2^2
\nonumber\\&\leq&
G\left(\mathbf{b}^{(0)}\right)-G\left(\mathbf{b}^{(\infty)}\right)<\infty,
\end{eqnarray}
where the second inequaltiy is becasue $G\left(\mathbf{b}\right)$ is bounded below by $\sum_{m=1}^M(L+T)\log\sigma_m^2$.
Since ${1}/{(2\eta_i)}-{L_{\text{d}}}/{2}>0$, \eqref{sumdf} yields
$\lim\limits_{i\rightarrow \infty}\left\Vert\mathbf{b}^{(i)}-\mathbf{b}^{(i-1)}\right\Vert_2=0$.
Thus, with $\mathbf{b}^{*}$ denoting a limit point of $\mathbf{b}^{(i)}$,
we also have $\mathbf{b}^{(i-1)}\rightarrow\mathbf{b}^{*}$.
Letting $i\rightarrow \infty$ in \eqref{PG}, we obtain \eqref{PGlim},
\begin{figure*}[!t]
\normalsize
\begin{eqnarray}\label{PGlim}
\mathbf{b}^{*}
&=& \arg \min _{\mathbf{b} \in [0,1]^{K(T+1)} } \frac{1}{2 \eta_{*}}\left\Vert \mathbf{b}-\left(\mathbf{b}^{*}-\textcolor{black}{\eta_*}\mathbf{d}^{*}\right)\right\Vert_{2}^{2}
-\rho\sum_{k=1}^K\max_{t\in\{0,\ldots,T\}}b_{k,t},\nonumber\\
&=&\arg \min _{\mathbf{b} \in [0,1]^{K(T+1)} }
G_0\left(\mathbf{b}^{*}\right)+\left(\mathbf{d}^{*}\right)^\mathrm{T}
\left(\mathbf{b}-\mathbf{b}^{*}\right)+\frac{1}{2\eta_*}\left\Vert\mathbf{b}-\mathbf{b}^{*}\right\Vert_2^2
-\rho\sum_{k=1}^K\max_{t\in\{0,\ldots,T\}}b_{k,t},~~
\end{eqnarray}
\hrulefill
\vspace*{4pt}
\end{figure*}
where \textcolor{black}{$\eta_*>0$ is the limit point of $\eta_i$ and}
the second equality comes from the last equality in \eqref{taylor}.
Since $\mathbf{b}^{*}$ is the optimal solution of \eqref{PGlim},
we have
\begin{eqnarray}\label{mini}
&&\left(\mathbf{d}^{*}\right)^\mathrm{T}
\left(\mathbf{b}-\mathbf{b}^{*}\right)+\frac{1}{2\eta_*}\left\Vert\mathbf{b}-\mathbf{b}^{*}\right\Vert_2^2-\rho\sum_{k=1}^K\max_{t\in\{0,\ldots,T\}}b_{k,t}
\nonumber\\
&\geq&-\rho\sum_{k=1}^K\max_{t\in\{0,\ldots,T\}}b_{k,t}^*,
\forall\mathbf{b}\in[0,1]^{K(T+1)}.
\end{eqnarray}
Let $\mathcal{T}\left(\mathbf{b^*}\right)$ denote the tangent cone of $\mathbf{b}\in[0,1]^{K(T+1)}$ at $\mathbf{b}^{*}$.
Substituting $\mathbf{b}=\mathbf{b}^{*}+\tau  \textcolor{black}{\tilde{\mathbf{d}}}$
with $\textcolor{black}{\tilde{\mathbf{d}}}\in\mathcal{T}\left(\mathbf{b^*}\right)$ into \eqref{mini}
and dividing $\tau$ on both sides of \eqref{mini},
we have
\begin{eqnarray}\label{first-order3}
&&-\rho\sum_{k=1}^K
\lim _{\tau \rightarrow 0^{+}} \frac{\max\limits_{t\in\{0,\ldots,T\}}\left(b_{k,t}^{*}+\tau  \textcolor{black}{\tilde{d}}_{k,t}\right)
- \max\limits_{t\in\{0,\ldots,T\}}b_{k,t}^*}{\tau}\nonumber\\
&&+\left(\mathbf{d}^{*}\right)^\mathrm{T} \textcolor{black}{\tilde{\mathbf{d}}}
\geq 0,~~
\forall  \textcolor{black}{\tilde{\mathbf{d}}}\in\mathcal{T}\left(\mathbf{b^*}\right).
\end{eqnarray}
Noticing that
$\mathbf{d}^{*}$ is the gradient of \textcolor{black}{$G_0(\mathbf{b})$},
we can rewrite \eqref{first-order3} as \eqref{first-order},
which means that the limit point $\mathbf{b^*}$ is a stationary point of problem \eqref{opt2}.

\section{Proof of Theorem \ref{stationary2}}\label{proof of stationary2}
\textcolor{black}{We first show that $\mathcal{L}\left(\left\{\mathbf{x}_m^{(i)}\right\}_{m=1}^M, \mathbf{b}^{(i)};
\left\{\boldsymbol{\lambda}_m^{(i)}\right\}_{m=1}^M
\right)$ is monotonically decreasing as $i$ increases.}
With $U_3\left(\mathbf{b}\right)\triangleq
\mathcal{L}\left(\left\{\mathbf{x}_m^{(i-1)}\right\}_{m=1}^M, \mathbf{b};
\left\{\boldsymbol{\lambda}_m^{(i-1)}\right\}_{m=1}^M
\right)+{\delta}/{2}\left\Vert\mathbf{b}-\mathbf{b}^{(i-1)}\right\Vert_2^2$
and $U_4\left(\mathbf{b}\right)$ denoting the cost function of problem \eqref{ADMM1},
we have
$\mathcal{L}\left(\left\{\mathbf{x}_m^{(i-1)}\right\}_{m=1}^M, \mathbf{b};
\left\{\boldsymbol{\lambda}_m^{(i-1)}\right\}_{m=1}^M
\right)\leq U_3\left(\mathbf{b}\right)=U_4\left(\mathbf{b}\right)+\sum\limits_{m=1}^M
f_m\left(\mathbf{x}_m^{(i-1)}\right)$,
where the equality holds when $\mathbf{b}=\mathbf{b}^{(i-1)}$.
Consequently,
\begin{eqnarray}\label{L1}
&&\mathcal{L}\left(\left\{\mathbf{x}_m^{(i-1)}\right\}_{m=1}^M, \mathbf{b}^{(i-1)};
\left\{\boldsymbol{\lambda}_m^{(i-1)}\right\}_{m=1}^M
\right)\nonumber\\
&&-\mathcal{L}\left(\left\{\mathbf{x}_m^{(i-1)}\right\}_{m=1}^M, \mathbf{b}^{(i)};
\left\{\boldsymbol{\lambda}_m^{(i-1)}\right\}_{m=1}^M
\right)\nonumber\\
&=&
U_3\left(\mathbf{b}^{(i-1)}\right)-\mathcal{L}\left(\left\{\mathbf{x}_m^{(i-1)}\right\}_{m=1}^M, \mathbf{b}^{(i)};
\left\{\boldsymbol{\lambda}_m^{(i-1)}\right\}_{m=1}^M
\right)\nonumber\\
&\geq&
U_3\left(\mathbf{b}^{(i)}\right)-\mathcal{L}\left(\left\{\mathbf{x}_m^{(i-1)}\right\}_{m=1}^M, \mathbf{b}^{(i)};
\left\{\boldsymbol{\lambda}_m^{(i-1)}\right\}_{m=1}^M
\right)\nonumber\\
&=&\frac{\delta}{2}\left\Vert\mathbf{b}^{(i)}-\mathbf{b}^{(i-1)}\right\Vert_2^2,
\end{eqnarray}
where the last inequality holds because $\mathbf{b}^{(i)}$
is the optimal solution of problem \eqref{ADMM1}, and hence also the minimizer of
$U_3\left(\mathbf{b}\right)$.

On the other hand, since $\mu>2L_m$,
we have $\mu\mathbf{I}_{K(T+1)}-\nabla^2 f_m(\mathbf{x}_m)\succeq L_m\mathbf{I}_{K(T+1)}$,
which means that problem \eqref{ADMM2} is strongly convex with modulus $L_m$.
Thus, \textcolor{black}{line~6} of Algorithm~2 can solve
problem \eqref{ADMM2} to the optimal solution $\mathbf{x}_m^{(i)}$ and we have
\begin{eqnarray}\label{L2}
&&\mathcal{L}\left(\left\{\mathbf{x}_m^{(i-1)}\right\}_{m=1}^M, \mathbf{b}^{(i)};
\left\{\boldsymbol{\lambda}_m^{(i-1)}\right\}_{m=1}^M
\right)\nonumber\\
&&-\mathcal{L}\left(\left\{\mathbf{x}_m^{(i)}\right\}_{m=1}^M, \mathbf{b}^{(i)};
\left\{\boldsymbol{\lambda}_m^{(i-1)}\right\}_{m=1}^M
\right)\nonumber\\
&\geq&\sum_{m=1}^M\frac{L_m}{2}\left\Vert\mathbf{x}_m^{(i)}-\mathbf{x}_m^{(i-1)}\right\Vert_2^2.
\end{eqnarray}
Moreover, since $\mathbf{x}_m^{(i)}$ is the optimal solution of
problem \eqref{ADMM2}, we have
\begin{equation}\label{ADMM2opt}
\nabla f_m\left(\mathbf{x}_{m}^{(i)}\right)
+\boldsymbol{\lambda}_m^{(i-1)}
+{\mu}\left(\mathbf{x}_m^{(i)}-\mathbf{b}^{(i)}\right)=\mathbf{0},~~\forall m=1,\ldots,M.
\end{equation}
Substituting \eqref{dual} into \eqref{ADMM2opt},
we obtain
\begin{eqnarray}\label{ADMM2opt2}
\left\Vert \boldsymbol{\lambda}_m^{(i-1)}-\boldsymbol{\lambda}_m^{(i)}\right\Vert_2&=&
\left\Vert \nabla f_m\left(\mathbf{x}_{m}^{(i-1)}\right)-\nabla f_m\left(\mathbf{x}_{m}^{(i)}\right)\right\Vert_2
\nonumber\\
&\leq& L_m
\left\Vert \mathbf{x}_{m}^{(i-1)}-\mathbf{x}_{m}^{(i)}\right\Vert_2.
\end{eqnarray}
Consequently, we have
\begin{eqnarray}\label{L3}
&&\mathcal{L}\left(\left\{\mathbf{x}_m^{(i)}\right\}_{m=1}^M, \mathbf{b}^{(i)};
\left\{\boldsymbol{\lambda}_m^{(i-1)}\right\}_{m=1}^M
\right)
\nonumber\\&&-\mathcal{L}\left(\left\{\mathbf{x}_m^{(i)}\right\}_{m=1}^M, \mathbf{b}^{(i)};
\left\{\boldsymbol{\lambda}_m^{(i)}\right\}_{m=1}^M
\right)\nonumber\\
&=&
\sum_{m=1}^M
\left(\boldsymbol{\lambda}_m^{(i-1)}-\boldsymbol{\lambda}_m^{(i)}\right)^\mathrm{T}
\left(\mathbf{x}_m^{(i)}-\mathbf{b}^{(i)}\right)\nonumber\\
&=&
-\frac{1}{\mu}\sum_{m=1}^M\left\Vert\boldsymbol{\lambda}_m^{(i-1)}-\boldsymbol{\lambda}_m^{(i)}\right\Vert_2^2
\nonumber\\
&\geq&
-\frac{1}{\mu}\sum_{m=1}^ML_m^2\left\Vert \mathbf{x}_{m}^{(i-1)}-\mathbf{x}_{m}^{(i)}\right\Vert_2^2,
\end{eqnarray}
where the second equality comes from \eqref{dual}
and the last inequality is due to \eqref{ADMM2opt2}.
Combining \eqref{L1}, \eqref{L2}, and \eqref{L3}, we obtain
\begin{eqnarray}\label{L4}
&&\mathcal{L}\left(\left\{\mathbf{x}_m^{(i-1)}\right\}_{m=1}^M, \mathbf{b}^{(i-1)};
\left\{\boldsymbol{\lambda}_m^{(i-1)}\right\}_{m=1}^M
\right)\nonumber\\
&&-\mathcal{L}\left(\left\{\mathbf{x}_m^{(i)}\right\}_{m=1}^M, \mathbf{b}^{(i)};
\left\{\boldsymbol{\lambda}_m^{(i)}\right\}_{m=1}^M
\right)\nonumber\\
&\geq&
\sum_{m=1}^M\left(\frac{L_m}{2}-\frac{L_m^2}{\mu}\right)\left\Vert \mathbf{x}_{m}^{(i)}-\mathbf{x}_{m}^{(i-1)}\right\Vert_2^2\nonumber\\
&&+\frac{\delta}{2}\left\Vert\mathbf{b}^{(i)}-\mathbf{b}^{(i-1)}\right\Vert_2^2\geq0,
\end{eqnarray}
where the last inequality is due to $\mu>2L_m$ and $\delta>0$.
Therefore, $\mathcal{L}\left(\left\{\mathbf{x}_m^{(i)}\right\}_{m=1}^M, \mathbf{b}^{(i)};
\left\{\boldsymbol{\lambda}_m^{(i)}\right\}_{m=1}^M
\right)$ is monotonically decreasing.

Next, we prove that $\mathcal{L}\left(\left\{\mathbf{x}_m^{(i)}\right\}_{m=1}^M, \mathbf{b}^{(i)};
\left\{\boldsymbol{\lambda}_m^{(i)}\right\}_{m=1}^M
\right)$ is lower bounded.
Substituting \eqref{dual} into \eqref{ADMM2opt}, we have
$\boldsymbol{\lambda}_m^{(i)}=-\nabla f_m\left(\mathbf{x}_{m}^{(i)}\right)$.
Substituting this equation into \eqref{ALF},
we obtain
\begin{eqnarray}\label{ALF2}
&&\mathcal{L}\left(\left\{\mathbf{x}_m^{(i)}\right\}_{m=1}^M, \mathbf{b}^{(i)};
\left\{\boldsymbol{\lambda}_m^{(i)}\right\}_{m=1}^M
\right)\nonumber\\
&=&\sum_{m=1}^Mf_m\left(\mathbf{x}_{m}^{(i)}\right)
+\rho\sum_{k=1}^K \left(\sum_{t=0}^Tb_{k,t}^{(i)}-\max_{t\in\{0,\ldots,T\}}b_{k,t}^{(i)}\right)
\nonumber\\
&&+\sum_{m=1}^{M}\nabla f_m\left(\mathbf{x}_{m}^{(i)}\right)^\mathrm{T}\left(\mathbf{b}^{(i)}-\mathbf{x}_m^{(i)}\right)
\nonumber\\&&+\frac{\mu}{2}\sum_{m=1}^{M}\left\Vert\mathbf{x}_m^{(i)}-\mathbf{b}^{(i)}\right\Vert_2^2.
\end{eqnarray}
Since $\mu\mathbf{I}_{K(T+1)}\succeq L_m\mathbf{I}_{K(T+1)} \succeq
\nabla^2 f_m\left(\mathbf{x}_{m}\right) $, we have
\begin{eqnarray}\label{ALF3}
\sum_{m=1}^Mf_m\left(\mathbf{x}_{m}^{(i)}\right)
+\sum_{m=1}^{M}\nabla f_m\left(\mathbf{x}_{m}^{(i)}\right)^\mathrm{T}\left(\mathbf{b}^{(i)}-\mathbf{x}_m^{(i)}\right)
\nonumber\\+\frac{\mu}{2}\sum_{m=1}^{M}\left\Vert\mathbf{x}_m^{(i)}-\mathbf{b}^{(i)}\right\Vert_2^2
\geq
\sum_{m=1}^Mf_m\left(\mathbf{b}^{(i)}\right).
\end{eqnarray}
Substituting \eqref{ALF3} into \eqref{ALF2}, we obtain
\begin{eqnarray}\label{ALF4}
&&\mathcal{L}\left(\left\{\mathbf{x}_m^{(i)}\right\}_{m=1}^M, \mathbf{b}^{(i)};
\left\{\boldsymbol{\lambda}_m^{(i)}\right\}_{m=1}^M
\right)
\nonumber\\
&\geq&\sum_{m=1}^Mf_m\left(\mathbf{b}^{(i)}\right)
+\rho\sum_{k=1}^K \left(\sum_{t=0}^Tb_{k,t}^{(i)}-\max_{t\in\{0,\ldots,T\}}b_{k,t}^{(i)}\right)\nonumber\\
&\geq&\sum_{m=1}^M(L+T)\log\sigma_m^2,
\end{eqnarray}
which means that $\mathcal{L}\left(\left\{\mathbf{x}_m^{(i)}\right\}_{m=1}^M, \mathbf{b}^{(i)};
\left\{\boldsymbol{\lambda}_m^{(i)}\right\}_{m=1}^M
\right)$ is lower bounded.

\textcolor{black}{Finally, we prove that any limit point of the sequence
$\left(\left\{\mathbf{x}_m^{(i)}\right\}_{m=1}^M, \mathbf{b}^{(i)};
\left\{\boldsymbol{\lambda}_m^{(i)}\right\}_{m=1}^M
\right)$
is a stationary point of problem \eqref{opt3}.}
Summing \eqref{L4} over $i$ yields
\begin{eqnarray}\label{L5}
&&\sum_{i=1}^\infty\frac{\delta}{2}\left\Vert\mathbf{b}^{(i)}-\mathbf{b}^{(i-1)}\right\Vert_2^2
\nonumber\\&&+
\sum_{i=1}^\infty\sum_{m=1}^M\left(\frac{L_m}{2}-\frac{L_m^2}{\mu}\right)\left\Vert \mathbf{x}_{m}^{(i)}-\mathbf{x}_{m}^{(i-1)}\right\Vert_2^2\nonumber\\
&\leq&
\mathcal{L}\left(\left\{\mathbf{x}_m^{(0)}\right\}_{m=1}^M, \mathbf{b}^{(0)};
\left\{\boldsymbol{\lambda}_m^{(0)}\right\}_{m=1}^M
\right)\nonumber\\&&
-\mathcal{L}\left(\left\{\mathbf{x}_m^{(\infty)}\right\}_{m=1}^M, \mathbf{b}^{(\infty)};
\left\{\boldsymbol{\lambda}_m^{(\infty)}\right\}_{m=1}^M
\right)
<\infty,\nonumber\\
\end{eqnarray}
where the second inequality comes from \eqref{ALF4}.
Since $\mu>2L_m$ and $\delta>0$, \eqref{L5} yields
$\lim\limits_{i\rightarrow \infty}\left\Vert\mathbf{b}^{(i)}-\mathbf{b}^{(i-1)}\right\Vert_2=\lim\limits_{i\rightarrow \infty}\left\Vert\mathbf{x}_m^{(i)}-\mathbf{x}_m^{(i-1)}\right\Vert_2=0$. Together with \eqref{ADMM2opt2} and \eqref{dual},
we also have
$\lim\limits_{i\rightarrow \infty}\left\Vert\textcolor{black}{\boldsymbol{\lambda}_m^{(i)}-\boldsymbol{\lambda}_m^{(i-1)}}\right\Vert_2=\lim\limits_{i\rightarrow \infty}\left\Vert\mathbf{x}_m^{(i)}-\mathbf{b}^{(i)}\right\Vert_2=0$.
Thus, with $\left(\left\{\mathbf{x}_m^{*}\right\}_{m=1}^M, \mathbf{b}^{*};
\left\{\boldsymbol{\lambda}_m^{*}\right\}_{m=1}^M
\right)$ denoting a limit point of the sequence
$\left(\left\{\mathbf{x}_m^{(i)}\right\}_{m=1}^M, \mathbf{b}^{(i)};
\left\{\boldsymbol{\lambda}_m^{(i)}\right\}_{m=1}^M
\right)$,
we have
\begin{eqnarray}\label{s0}
\mathbf{b}^{(i-1)}\rightarrow\mathbf{b}^{*},~~
\mathbf{x}_m^{(i-1)}\rightarrow\mathbf{x}_m^{*},~~
\boldsymbol{\lambda}_m^{(i-1)}\rightarrow\boldsymbol{\lambda}_m^{*},\nonumber\\
\forall m=1,\ldots,M,
\end{eqnarray}
\begin{equation}\label{s1}
\mathbf{x}_m^{*}=\mathbf{b}^{*},~~\forall m=1,\ldots,M.
\end{equation}
Taking limit for \eqref{ADMM2opt}, and using \eqref{s0} and \eqref{s1},
we have
\begin{equation}\label{s2}
\nabla f_m\left(\mathbf{x}_{m}^{*}\right)
+\boldsymbol{\lambda}_m^{*}
=\mathbf{0},~~\forall m=1,\ldots,M.
\end{equation}
On the other hand, since
$\mathbf{b}^{(i)}$ is the optimal solution of problem \eqref{ADMM1}, it satisfies the following first-order
optimality condition:
\begin{eqnarray}\label{s3}
&&\rho\sum_{k=1}^K \sum_{t=0}^T\textcolor{black}{\tilde{d}}_{k,t}-\sum_{m=1}^{M}\left(\boldsymbol{\lambda}_m^{(i-1)}\right)^\mathrm{T}\textcolor{black}{\tilde{\mathbf{d}}}
\nonumber\\
&&-\rho\sum_{k=1}^K\lim_{\tau \rightarrow 0^{+}} \frac{\max\limits_{t\in\{0,\ldots,T\}}\left(b_{k,t}^{(i)}+\tau \textcolor{black}{\tilde{d}}_{k,t}\right)
- \max\limits_{t\in\{0,\ldots,T\}}b_{k,t}^{(i)}}{\tau}
\nonumber\\
&&+\delta\left(\mathbf{b}^{(i)}-\mathbf{b}^{(i-1)}\right)^\mathrm{T}\textcolor{black}{\tilde{\mathbf{d}}}
+\mu\sum_{m=1}^{M}\left(\mathbf{b}^{(i)}-\mathbf{x}_m^{(i-1)}\right)^\mathrm{T}\textcolor{black}{\tilde{\mathbf{d}}}
\geq 0,\nonumber\\
&&\forall \textcolor{black}{\tilde{\mathbf{d}}}\in\mathcal{T}\left(\mathbf{b}^{(i)}\right),
\end{eqnarray}
where $\mathcal{T}\left(\mathbf{b}^{(i)}\right)$ is the tangent cone of the feasible set
of problem \eqref{ADMM1} at $\mathbf{b}^{(i)}$.
Taking limit for \eqref{s3}, using \eqref{s0} and \eqref{s1},
and noticing that $\mathcal{T}\left(\mathbf{b}^{*}\right)\subseteq
\mathcal{T}\left(\mathbf{b}^{i}\right)$ from \eqref{tagent},
we have
\begin{eqnarray}\label{s4}
&&\rho\sum_{k=1}^K \sum_{t=0}^T\textcolor{black}{\tilde{d}}_{k,t}
-\sum_{m=1}^{M}\left(\boldsymbol{\lambda}_m^{*}\right)^\mathrm{T}\textcolor{black}{\tilde{\mathbf{d}}}
\nonumber\\
&&-\rho\sum_{k=1}^K
\lim _{\tau \rightarrow 0^{+}} \frac{\max\limits_{t\in\{0,\ldots,T\}}\left(b_{k,t}^{*}+\tau \textcolor{black}{\tilde{d}}_{k,t}\right)
- \max\limits_{t\in\{0,\ldots,T\}}b_{k,t}^{*}}{\tau}
\nonumber\\
&\geq& 0,~~\forall \textcolor{black}{\tilde{\mathbf{d}}}\in\mathcal{T}\left(\mathbf{b}^{*}\right).
\end{eqnarray}
Combining \eqref{s1}, \eqref{s2}, and \eqref{s4}, we can conclude that
$\left(\left\{\mathbf{x}_m^{*}\right\}_{m=1}^M, \mathbf{b}^{*};
\left\{\boldsymbol{\lambda}_m^{*}\right\}_{m=1}^M
\right)$ is a stationary point of problem~\eqref{opt3}.

\bibliographystyle{IEEEtran}
\bibliography{IEEEabrv,LiYang}

\end{document}